\title{\textbf{Stabilizing Consensus with Many Opinions}}
\author[1]{L. Becchetti}
\author[2]{A. Clementi}
\author[1]{E. Natale}
\author[2]{F. Pasquale}
\author[3]{L. Trevisan}
\affil[1]{\emph{Sapienza} Universit\`a di Roma, {\tt becchett@dis.uniroma1.it}, {\tt natale@di.uniroma1.it}}
\affil[2]{Universit\`a \emph{Tor Vergata} di Roma, {\tt clementi@mat.uniroma2.it}, {\tt pasquale@mat.uniroma2.it}}
\affil[3]{U.C. Berkeley, {\tt luca@berkeley.edu}}
\newtheorem*{rep@theorem}{\rep@title}
\newcommand{\newreptheorem}[2]{%
\newenvironment{rep#1}[1]{%
 \def\rep@title{#2 \ref{##1}}%
 \begin{rep@theorem}}%
 {\end{rep@theorem}}}
\newtheorem{definition}{Definition}[section]
\newtheorem{corollary}[definition]{Corollary}
\newtheorem{lemma}[definition]{Lemma}
\newtheorem{theorem}[definition]{Theorem}
\newcommand{\Prob}[2]{\mathbf{P}_{#1} \left( #2 \right)}
\newcommand{\expec}[2]{\mathbf{E}\left[\left. #1 \;\right| #2 \right]}
\newcommand{\Expec}[2]{\mathbf{E}_{#1} \left[ #2 \right]}
\newcommand{\poly}{ {\mathrm{poly}}}
\newcommand{\polylog}{ {\mathrm{polylog} } }
\newcommand{\bigO}{\mathcal{O}}
\newcommand{\skproof}{\noindent\textit{Sketch of Proof. }}
\newcommand{\ideaproof}{\noindent\textit{Idea of the proof. }}
\newcommand{\imax}{ {\mathbbm{M}}}
\newcommand{\jmax}{ {\mathbbm{2M}}}
\newcommand{\col}{\ensuremath \mathbf{C}}
\newcommand{\goodcols}{\mathcal{C}}
\newcommand{\badcols}{\bar{\mathcal{C}}}
\global\long\def\bad{S}
\newcommand{\bigc}{B}
\newcommand\advconst{\beta}
\newcommand{\adv}{\advconst \sqrt{n}/(k^{\frac 52} \log n)}
\newcommand{\fadv}{ \frac{ \advconst \sqrt{n}}{ k^{\frac 52} \log n }}
\newcommand{\advletter}{D}
\newcommandx\advi[1][usedefault, addprefix=\global, 1=]{\advletter_{i}^{#1}}
\newcommand{\smallbiasconst}{\alpha}
\newcommand{\smallsizeconst}{\gamma}
\newcommand{\smallsizedenom}{k^{\frac 32}}
\newcommand{\smallsize}{ \smallsizeconst \sqrt{ n }/\smallsizedenom}
\newcommand{\fsmallsize}{ \smallsizeconst \frac{\sqrt{  n}}{\smallsizedenom}}
\newcommand{\smallbias}{ \smallbiasconst \sqrt{ n / k }}
\newcommand{\dieadvfactor}{ \psi }
\newcommand{\border}{ m }
\newcommand{\hhyp}{condition $\mathcal{H}$\xspace}
\global\long\def\constone{\alpha_{1}}
\global\long\def\consttwo{\alpha_{2}}
\global\long\def\constthree{\alpha_{3}}
\newcommandx\mincol[1][usedefault, addprefix=\global, 1=]{\mathbbm m\left(#1\right)}
\newcommand{\timestart}{t'} 
\newcommand{\timeend}{t''} 
\renewcommand{\leq}{\leqslant}
\renewcommand{\geq}{\geqslant}
\renewcommand{\le}{\leqslant}
\renewcommand{\ge}{\geqslant}
\begin{document}

\maketitle

\begin{abstract}
We consider the following distributed consensus problem: Each node in a
complete communication network of size $n$  initially holds an  \emph{opinion},
which is chosen arbitrarily from a finite set $\Sigma$. The
system must converge toward a consensus state in which all, or almost all
nodes, hold the same opinion. Moreover, this opinion should be \emph{valid},
i.e., it should be one among those initially present in the system. This
condition should be met even in the presence of an adaptive, malicious
adversary who can modify the opinions of a bounded number of nodes in every
round.

We consider the \emph{3-majority dynamics}:
At every round, every node pulls the opinion from three random neighbors and sets 
his new opinion to the majority one (ties are broken arbitrarily).
Let $k$ be the number of valid opinions. We show that, if $k \leqslant n^{\alpha}$, 
where $\alpha$ is a suitable positive constant, the 3-majority dynamics 
converges in time polynomial in $k$ and $\log n$ with high probability even in the presence of an adversary 
who can affect up to $o(\sqrt{n})$ nodes at each round. 

Previously, the convergence of the 3-majority protocol was known for $|\Sigma|
= 2$ only, with an argument that is robust to adversarial errors. On the other
hand, no anonymous,   uniform-gossip protocol that is robust to
adversarial errors was known for $|\Sigma| > 2$.

\end{abstract}

\medskip
\noindent
\textbf{Keywords:} Distributed Consensus,
Byzantine Agreement,
Gossip Model,
Majority Rules,
Markov Chains.

\thispagestyle{empty}
\newpage

\section{Introduction}
We study  the following probabilistic, {\em synchronous} process on a complete
  network of $n$ anonymous nodes: At the beginning, each node
holds an ``opinion'' which is an element of an arbitrary  finite set $\Sigma$. We 
call an opinion {\em valid} if it is held by at least one node at the
beginning. Then, in  each   round, the following happens: 1) every node pulls the    opinion  from three 
  random nodes and sets its new opinion to 
the majority one (ties are broken arbitrarily), and 
 2) an adaptive  \emph{dynamic adversary} can arbitrarily
change  the opinions of some  nodes.
We consider \emph{$F$-dynamic adversaries} that, at every round,  can change the opinions of up to $F$ nodes, possibly 
  introducing  non-valid opinions.

Let the system start from any configuration 
    having  $k$ valid opinions with $k \leq n^{\alpha}$ for some constant $\alpha<1$ and consider any 
    $F$-dynamic  adversary with $F = \bigO (\sqrt n / (k^{5/2}\log n))$.
 We prove that the process converges to a configuration in which all but $\mathit{O}(\sqrt{n})$ nodes
hold the same valid opinion within  $\mathit{O}( (k^{2} \sqrt{\log n} + k \log n ) (k+\log n) )$
 rounds, with high probability.   So, this 
 bounded adversary has no relevant chances to force the system to converge to non-valid opinions.

This shows that the \emph{3-majority dynamics}  provides an efficient solution
to the \emph{stabilizing-consensus} problem    in the \emph{uniform-gossip} model. 
Previously, this was known only for the binary case, i.e. $|\Sigma | = 2$, while for any  $| \Sigma| \geq 3$, 
it  has been an important open question for several years
   \cite{AAE07,DGMSS11}.  Furthermore, still for any  $| \Sigma| \geq 3$,
   $o(n)$-time convergence of the
3-majority dynamics  was open even in the absence of an adversary whenever
the initial bias toward some  plurality opinion is not large.

In the reminder of this section, we will   describe  in more detail the consensus problem and various
network scenarios  in which it is of interest, our result in this setting, and a comparison with 
previous related results.

\subsection{Consensus (or Byzantine agreement)}
The {\em consensus} problem in a distributed network is defined as follows: A
collection of agents, each holding a piece of information (an element
of a set $\Sigma$), interact with the goal of agreeing on one
of the elements of $\Sigma$ initially held by at least one agent,
possibly in the presence of an adversary that is trying to disrupt the
protocol. The consensus problem in the presence of an adversary (known
as Byzantine agreement) is a fundamental primitive in the design of
distributed algorithms \cite{PSL80,R83}.
The goal is to design a distributed, local protocol that
brings the system into  a configuration that meets the following
conditions: \emph{(1) Agreement}: All non-corrupted nodes
support  the same opinion $v$; \emph{(2) Validity}:  The 
opinion   $v$   must be a \emph{valid } one, i.e., an  opinion  
which was initially declared  by  at least one (non-corrupted) 
node; \emph{(3) Termination}: Every non-corrupted node can 
correctly decide to stop running the protocol at some round.

Recently, there has been considerable interest in the design of consensus
algorithms in models that severely restrict both communication and
computation \cite{AAE07,BCNPS15,DGMSS11}, both for efficiency consideration and because such
models capture aspects of the way consensus is reached in social
networks, biological systems, and other domains of interest in network
science \cite{AAD+06,AFJ06,Dolev,cardelli2012cell,Doty14,FHK14,HouseHunt}.

In particular, we assume an anonymous network in which nodes 
possess no unique IDs, nor do they have any static   binding   of their local link 
ports (i.e., nodes cannot keep track of \emph{who sent what}).  
From the point of view of computation, the most restrictive setting is
to assume that each node only has $\bigO (\log |\Sigma|)$ bits of memory 
available, i.e., it just suffices to store a constant 
number of opinions. We further assume that this bound extends to 
link bandwidth available in each round. Finally, communication 
capabilities are severely 
constrained and non-deterministic: Every node can communicate 
with at most a (small) constant number of random neighbors 
in each round.   
These constraints are well-captured by  the \emph{uniform-gossip}   
communication  model \cite{DGHILSSST87,KSSV00,KDG03}:  
At every round, every node can exchange a (short) message 
(say, $\Theta(\log(|\Sigma|))$ bits)  with each of at most  $h$  random 
neighbors, where $h$ is a (small) absolute constant\footnote
{In fact, $h = 1$ in the standard uniform-gossip model. 
It is easy to verify that all our results still hold 
in this more restricted model at the cost of a constant 
slow-down in convergence time and local memory size.}. A more recent,  
sequential  variant  of the uniform-gossip model is the \emph
{(random) population-protocols} model \cite{AAE07,AAE06,AAD+06} in which, in each 
round, a single interaction between a pair of randomly selected nodes 
occurs. 

\bigskip
\noindent The classic notion of consensus is too strong and unrealistic in the 
aforementioned distributed settings,  that instead rely on \emph{weaker} forms of 
consensus, deeply investigated in \cite{AAE07,AFJ06,A12,DGMSS11}. 
In this paper, we consider a variant of  the \emph
{stabilizing-consensus} problem \cite{AFJ06} considered   in  \cite{AAE07}:
  There, a solution is required to converge to a stable
\emph{regime}  in which the above  three properties   are guaranteed 
in  a relaxed,  still useful  form\footnote{ These 
relaxed convergence   properties  are described in detail in Section 7 of \cite{AAE07}.}. More precisely:
  
\begin{definition}
\label{def:stabcons}
A \emph{stabilizing almost-consensus} protocol must ensure the 
following properties: 

\noindent
- \emph{Almost  agreement.} Starting from any initial 
configuration, in a finite number of rounds, the system must reach 
a   \emph{regime} of       configurations where all but a \emph
{negligible} ``bad'' subset   (i.e. having size    $\bigO (n^{\gamma})$ 
for  constant $\gamma < 1$) of the nodes support the same 
opinion.    

\noindent 
- \emph{Almost validity.} The system is required to 
converge w.h.p. to an almost-agreement regime where all but a 
negligible  bad set  of nodes keep the same \emph{valid} opinion.

\noindent - \emph{Non termination.}  In    dynamic distributed 
systems,    nodes represent  simple and anonymous computing units 
which       are  not  necessarily    able to  detect  any global 
property.

\noindent - \emph{Stability.}  The convergence toward such a   
weaker form of agreement is only guaranteed to  hold \emph{ with 
high probability} (in short, \emph{w.h.p.}\footnote{According to 
the standard definition, we say that a sequence  of  events 
$\mathcal E_n$,  $n= 1,2, \ldots$ holds \emph{with high 
probability} if  $ \Prob{}{\mathcal E_n} = 1-\bigO (1/n^{\lambda})$ for  
some positive constant $\lambda >0$.}) and only over a \emph{long 
period} (i.e. for any arbitrarily-large polynomial number of 
rounds).
 
\end{definition}

The main result of this paper is on the convergence properties of the 
3-majority dynamics  in the uniform-gossip model in the presence of the    
adaptive $F$-dynamic adversary (defined above) and of the adaptive \emph{$F$-static adversar}y. In the latter,
the adversary looks at the initial configuration, then changes the opinion 
of up to $F$ nodes and,  after that, no further adversary's actions are allowed.  

\begin{theorem}
Let $k \leq n^{\alpha}$  for some constant  $\alpha <1 $  and $F = \adv$ for some constant   $\beta>0$. 
Starting from any initial configuration having $k$ valid opinions, the  3-majority
dynamics  reaches a (valid) stabilizing almost-consensus in presence of   any
$F$-dynamic adversary within 
$\bigO( (k^{2} \sqrt{\log n} + k \log n ) (k+\log n) )$
rounds, w.h.p. \\
Moreover, the same bound on the convergence time holds in the presence of  any 
  $F$-static adversary with a larger bound on $F$, i.e.,   $F = n/k-\sqrt{kn\log n}$. 
\end{theorem}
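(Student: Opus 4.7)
The plan is to track the vector of opinion counts $\mathbf{c} = (c_1, \ldots, c_k)$, where $c_i$ is the number of nodes currently supporting opinion $i$, and to prove that within the claimed time bound a single valid opinion comes to dominate. The backbone of the analysis is the one-step expectation of the 3-majority update: setting $p_i := c_i/n$ and $R := \sum_{j} p_j^2$, a direct calculation gives
\[
\Expec{}{c_i' \mid \mathbf{c}} \;=\; c_i \bigl(1 + p_i - R\bigr),
\]
so $\Expec{}{c_i' - c_i \mid \mathbf{c}} = c_i (p_i - R)$. This identity encodes the \emph{biased-drift} property of the 3-majority rule: by Cauchy--Schwarz the largest opinion always satisfies $p_i \geq R$, hence is expected to grow, while any opinion with $p_i < R$ is expected to shrink. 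Combined with a bounded-differences martingale inequality giving $|c_i' - \Expec{}{c_i' \mid \mathbf{c}}| = \bigO(\sqrt{n \log n})$ w.h.p.\ (each node contributes at most $\pm 1$ to $c_i'$ independently), this is the engine of the whole proof.

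I would partition the evolution into three phases and bound the time spent in each. \emph{Phase I (birth of a plurality).} Starting from an arbitrary configuration, I would show that some opinion $i^{*}$ acquires a bias $p_{i^{*}} \geq R + \Omega(\sqrt{1/(nk)})$ within $\bigO(k \log n)$ rounds w.h.p. The key tool is a Doob decomposition of $R$ itself: the conditional variance of the 3-majority update provides a positive additive push on $R$ even when the first-order drift is negligible, and I would argue that the adversary's per-round budget $F = \fadv$ is too small to cancel this variance-driven growth. \emph{Phase II (amplification).} Once $p_{i^{*}} - R \geq \Omega(\sqrt{1/(nk)})$, the drift identity yields a multiplicative drift for the gap; a standard Azuma/Freedman argument over time windows of length $\bigO(k+\log n)$ shows that the gap roughly doubles per window, so that $p_{i^{*}}$ reaches $\Omega(1/k)$ above every competitor in $\bigO((k+\log n)\log n)$ additional rounds. \emph{Phase III (elimination).} Every opinion $j \neq i^{*}$ with $p_j < R$ contracts multiplicatively each round; a Chernoff and union-bound argument, iterated over the at most $k$ opinions that must be killed, drives all minority opinions below $\bigO(\sqrt{n})$ within the remaining budget, producing an almost-agreement regime on the valid opinion $i^{*}$.

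For the adversary, the crucial observation is that in a window of $T = \bigO(k+\log n)$ rounds the $F$-dynamic adversary flips at most $FT = \bigO(\sqrt{n}/k^{3/2})$ nodes, which is a factor $\Theta(k)$ smaller than the Phase~I bias threshold $\sqrt{n/k}$; hence the progress accumulated in each window survives the adversarial perturbation, and an induction over windows preserves the invariants needed in all three phases. In the static case the one-shot budget $n/k - \sqrt{kn\log n}$ is calibrated so that, after the adversary acts on the initial configuration, the valid plurality still exceeds every non-valid block by $\Omega(\sqrt{kn\log n})$; Phase~II then takes over directly on the set of valid opinions, without any need of Phase~I.

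The hardest step I expect is Phase~I: understanding the stochastic evolution of $R$ when $R \approx 1/k$, where the first-order drift of $R$ nearly vanishes. Here one has to quantify the additive push on $R$ coming from the quadratic variation of the 3-majority update, show that this push survives the adversary's opposing action, and prove that the resulting weakly-driven stochastic process escapes the near-uniform configuration within $\poly(k,\log n)$ rounds. Phases~II and~III are by comparison more routine biased-drift arguments, but the tight $k$-dependence in the adversary's budget is dictated precisely by how delicate the escape in Phase~I is.
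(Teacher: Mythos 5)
Your proposal pivots on \emph{growing} a plurality opinion $i^{*}$ to dominance (Phases~I and~II), whereas the paper's proof pivots on \emph{killing} the minimum opinion, and this is not a cosmetic difference: the growth route is exactly the ``failed attempt'' that the paper discusses and discards in Section~\ref{sec:prely}. Near the uniform configuration your target bias for $i^{*}$ is $p_{i^{*}}-R=\Omega(\sqrt{1/(nk)})$, i.e.\ about $\sqrt{n/k}$ in counts --- which is the same order as the per-round standard deviation of $C_{i^{*}}$. Consequently, once you fix $i^{*}$ and try to argue multiplicative amplification of $p_{i^{*}}-R$, there is no guarantee that $i^{*}$ remains above $R$: a single round of typical noise can push it below, after which the drift turns against it. As the paper observes about $C_{\imax(t)}-C_{\jmax(t)}$, the submartingale argument breaks because the \emph{index} of the leading opinion is itself a random variable that is not stable at this noise scale. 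Replacing $C_{\imax}-C_{\jmax}$ by $p_{i^{*}}-R$ for a fixed $i^{*}$ does not escape this obstruction; the ``Doob decomposition of $R$'' argument for Phase~I, even if it succeeds in pushing $R$ above $1/k$, does not hand you a \emph{stable} plurality to carry into Phase~II.

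The paper sidesteps the issue by looking at the order statistic $C_{\mathbbm m}^{(t)}=\min_i C_i^{(t)}$ instead of a fixed index. Because $p_{\mathbbm m}\leq 1/j\leq R$ always holds, $C_{\mathbbm m}$ has a nonpositive first-order drift at every configuration, and the symmetry-breaking Lemma~\ref{lemma:breaking} shows that even when this drift nearly vanishes, the process variance gives $C_{\mathbbm m}$ a guaranteed negative drift of order $\sqrt{n}/j^{3/2}$, which feeds into the optional-stopping Lemma~\ref{lemma:expectimewithdrift}. The analysis then eliminates opinions one at a time (at most $k$ phases, each $\bigO(j^2\sqrt{\log n}+j\log n)$ rounds with constant success probability, amortized with a union bound), rather than promoting one opinion to victory. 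This asymmetry is why the paper's bound carries a $k^3$-type factor that your sketch would not produce; the paper itself flags in Section~5 that removing it seems to require new ideas about the drift of the \emph{maximum}, precisely the object your proof relies on but for which no uniformly valid drift bound is known. Your static-adversary remark also glosses over the fact that the adversary can create a non-valid block of size nearly $n/k$, which must be killed by a dropping-stage argument rather than by jumping straight to amplification. The overall architecture needs to be reorganized around the decreasing minimum (and, in the dynamic case, around the bookkeeping of ``small'' versus ``big'' opinions) for the argument to close.
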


In \cite{BCNPST13}, an $\Omega(k \log n)$ bound on the  convergence-time of the 3-majority dynamics
is derived (that holds even  when the system  starts from biased configurations): So, our  bound is 
almost-tight whenever $k = \bigO(\polylog (n))$.

Not assuming a 
large initial bias of the plurality opinion considerably complicates the analysis. Indeed, 
the major open challenge  is the analysis  from 
(almost) uniform configurations, where     the system needs to break the initial symmetry 
in the absence of significant drifts towards any of the initial opinions.  So far, this issue has never been 
analyzed even in the non-adversarial case. Moreover, the 
phase before symmetry breaking is the one in which the adversary 
has more chances to cause undesired behaviours: Long   delays 
and/or convergence towards non-valid opinions. 
In Section \ref{sec:prely}, after providing some preliminaries, 
we shall discuss the above technical challenges.
 
\subsection{Previous results}
Consensus problems in distributed systems have been the focus of a 
large body of work in several research areas, such as distributed computing \cite{GK10},
communication networks \cite{RM08}, social networks and voting systems \cite{MNT14,YOASS13}, distributed databases \cite{DGHILSSST87,DGMMPR10}, biological systems and Chemical Reaction Networks \cite{cardelli2012cell}.
For brevity's sake, we here focus on results that are closest in 
spirit to our work. 

In \cite{AAE07}, the authors 
show that w.h.p. $n$ agents that meet at random can  reach valid stabilizing almost-consensus 
in $\bigO(n \log n)$ pairwise interactions against 
an $F= o(\sqrt{n})$-bounded   dynamic adversary.  The adopted protocol is the well-studied 
third-state protocol \cite{AAE07,PVV09}. However, their  analysis (and, thus, their  result) only holds for 
the binary case and
for the \emph{population-protocol}   model: At every round only one pair of nodes can interact.  
The authors left the existence of protocols for the multi-valued Byzantine case as a final 
open question \cite{AAE07}.
In general, sequential processes are much easier to analyze 
than parallel ones (like those yielded by the  uniform-gossip model): For instance, 
the resulting Markov chains are reversible \cite{lpw08} while those arising from parallel
processes are non-reversible. 
   
In  the uniform-gossip model, in  \cite{DGMSS11} the authors   provide an   analysis 
of the \emph{3-median} rule, in which every node   updates its  value to 
the median of its random sample.
They show that this dynamics   converges to  an almost-agreement configuration (which is 
even a good approximation of the global  median) within 
$\bigO(\log k \cdot \log\log  n + \log n)$ rounds, w.h.p.  
It turns out that, in  the binary case, the median rule is equivalent to    
the 3-majority dynamics, thus    their result implies that 3-majority 
 is   an $(F = \sqrt{n})$-stabilizing consensus   
with $\bigO (\log n)$ convergence time.  However, in the non-binary case,  
it requires $\Sigma$ to be a totally-ordered set and the possibility  to perform 
  basic algebraic operations: This is a rather strong restriction in applications arising from 
social networks, voting-systems, and bio-inspired systems. More importantly,
we emphasize  that, even   assuming  an  ordered opinion set $(\Sigma, \leq)$,
the 3-median rule   does not
guarantee  the crucial property of \emph{validity} against both $F$-static (and, clearly, dynamic)  adversaries even 
for very-small bounds on $F$ (say $F = \polylog(n)$).  

We strongly believe that the validity property of consensus  plays a crucial role
in several  realistic  scenarios, such as   monitoring sensor 
networks, bio-inspired dynamic systems, and voting systems \cite{cardelli2012cell,MNT14,YOASS13}.

More recently, the 3-majority rule in the multi-opinion case (i.e. for $|\Sigma| \geq 3$) has been studied for a  
stronger goal than consensus, namely, \emph{stabilizing plurality consensus} \cite{BCNPST13}. In this task, the goal 
is to reach an almost-stable consensus towards the   valid opinion initially
supported by \emph{the plurality} of the nodes. However, the initial configuration is assumed to have a
large bias towards the pluraltiy opinion.
Then, let $k$ be the number of valid opinions, and let $s$ be the initial 
difference between the largest and the second-largest opinion:  
By strongly exploiting    the assumption
$s \geq \sqrt{k n \log n}$, the authors in \cite{BCNPST13} proved that, w.h.p., the system converges   
to the plurality opinion within time $\Theta(k \log n)$.
    
Another  version of binary stabilizing almost-consensus  is the one 
studied by Yildiz et al in \cite{YOASS13}: Here, corrupted nodes 
are \emph{stubborn} agents of a social network  who influence 
others but never change their opinions. They prove  negative 
results under a generalized variant of the  classic voter dynamics in 
the (Poisson-clock) population-protocol model.

\section{The Process and its Analysis in a Nuthshell} \label{sec:prely}

\noindent
\textbf{Preliminaries.}
We assume a  distributed system consisting of $n$ nodes that 
communicate with  each other over a complete graph via the 
synchronous \emph{uniform-gossip} mechanism: In every 
round, each node can pull information from (at most) $h$ random 
neighbors, where $h$ is an absolute constant (in this work, $h = 3$).
At the onset, every node chooses an arbitrary item, called 
\emph{opinion}, from an arbitrary finite 
set $\Sigma$.  
A simple dynamics for consensus is the \emph{$3$-majority protocol} 
\cite{BCNPST13}:

\begin{quote}
\textit{In each round, every node  
samples three nodes uniformly at random (including itself and with repetitions) 
and revises its opinion  according to the majority  of the opinions it sees. 
If it sees three different opinions, it picks the first one.}
\end{quote}
\noindent
Clearly, in the case of three different opinions, choosing the second or 
the third one would not make any difference, nor would choosing one of 
the observed opinions uniformly at random.

Since the communication graph is complete and  nodes are anonymous, the 
overall system state  at any round can be described by a \emph{ 
configuration} $\mathbf{c}:=\left(c_{1},...,c_{|\Sigma|}\right)$, 
where  the \emph{support}  $c_{i}$ of opinion $i$  is  the number of nodes holding      opinion $i$ in 
that system's state. Given    configuration $\mathbf{c}$, we say that an opinion $i$ is \emph{active} in $\mathbf{c}$ if
$c_i >0$ and, 
for any set of active opinions $W\subseteq \Sigma$,     we define $\mincol[W]:=\arg\min_{i\in W}c_{i}$.
For any variable  $x$ of the process, we write 
$x^{\left(t\right)}$  if we  are considering  its value at round $t$ and  $X^{\left(t\right)}$ to
denote the corresponding random variable. 
Furthermore, following \cite{lpw08}, considered a configuration 
$\mathbf{c}$ and a random variable $X$ defined over the process, we 
write $\Prob{\mathbf{c}}{X^{(t)} = x}$ for $\Prob{}{X^{(t)} = x 
\, \vert \, \mathbf{C}^{(0)} = \mathbf{c}}$, i.e., to denote the probability 
distribution of the variable $X$ when the system evolves for $t$ 
consecutive rounds starting from configuration $\mathbf{c}$. 
Analogously, we write $\Expec{\mathbf{c}}{X^{(t)}}$ for the associated 
conditional expectation.

\bigskip 
The next lemma  provides 
 the expected number of nodes supporting a given opinion  at round $t+1$ (and a general 
  upper  bound to it), given the   configuration at
round $t$. The simple proof of the first equality is in 
\cite{BCNPST13}. It is also  included  in Appendix \ref{sec:apx:prely}  to make the paper 
 self-contained.    

\begin{lemma}[See \cite{BCNPST13}]
\label{lem:average} Let $\mathbf{c}$ be the   configuration at round  $t$ and let $W \subseteq \Sigma$ 
be the subset of active opinions   in $\mathbf{c}$.
 Then, for any opinion $i \in W$,  
\begin{equation}
\expec{ C_{i}^{(t+1)} }{\mathbf{C}^{(t)}= \mathbf{c} } = c_i \left(1+\frac{c_{i}}{n}-\frac{\sum_{j \in W}c_{j}^{2}}{n^{2}}\right)
\leq 
c_{i}\left(1+\frac{c_{i}}{n}-\frac{1}{|W|}\right)\label{eq:average}
\end{equation}
\end{lemma}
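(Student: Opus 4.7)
The plan is to compute, for a fixed node $v$, the probability $P_i$ that $v$ holds opinion $i$ at round $t+1$ conditional on the configuration $\mathbf{c}$ at round $t$. Since every node updates independently by sampling three nodes uniformly and independently over the whole network, $P_i$ is the same for every $v$, so by linearity of expectation $\Expec{\mathbf{c}}{C_i^{(t+1)}} = n \cdot P_i$, and it suffices to give a clean formula for $P_i$. Writing $p_j = c_j / n$, I would observe that each of the three samples is opinion $j$ with probability $p_j$ independently, so the relevant events depend only on the multiset of sampled opinions.

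The case analysis splits naturally into two disjoint events. First, node $v$ becomes $i$ whenever at least two of its three samples have opinion $i$, contributing $3 p_i^2 (1-p_i) + p_i^3 = 3 p_i^2 - 2 p_i^3$. Second, when the three sampled opinions are pairwise distinct, the tie-breaking rule makes $v$ adopt the opinion of the first sample; conditioning on the first sample being $i$ and the other two being distinct opinions, both different from $i$, yields
\[
p_i \sum_{j \neq i} p_j (1 - p_i - p_j) \;=\; p_i \bigl[(1-p_i)^2 - \textstyle\sum_{j \neq i} p_j^2\bigr],
\]
where I have used $\sum_{j \neq i} p_j = 1 - p_i$. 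Adding the two contributions and simplifying (the $p_i^3$ terms cancel, and the $p_i^2$ coefficients combine to $+1$) collapses everything to $P_i = p_i \bigl(1 + p_i - \sum_{j \in W} p_j^2\bigr)$; note that only active opinions contribute to the sum since $p_j = 0$ for $j \notin W$. Multiplying by $n$ gives the claimed identity. The arithmetic bookkeeping in the simplification is the only delicate part, but it is completely routine.

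For the inequality, I would invoke the Cauchy--Schwarz (equivalently, power-mean) bound on the $|W|$ nonzero terms: since $\sum_{j \in W} p_j = 1$,
\[
\sum_{j \in W} p_j^2 \;\ge\; \frac{1}{|W|} \Bigl(\sum_{j \in W} p_j\Bigr)^2 \;=\; \frac{1}{|W|},
\]
so replacing $\sum_j c_j^2 / n^2$ by its lower bound $1/|W|$ in the equality (it appears with a minus sign and a positive prefactor $c_i$) gives the stated upper bound. No step here is a real obstacle — the only thing to be careful about is keeping track of the ordered versus unordered sampling in the all-distinct case so that the tie-breaking contribution is correctly weighted.
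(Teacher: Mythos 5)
Your proof is correct and follows essentially the same route as the paper's: condition on the configuration, compute the per-node adoption probability $P_i$ by splitting into the ``majority is $i$'' event and the ``three-way tie with $i$ first'' event, simplify to $P_i = p_i(1 + p_i - \sum_{j\in W} p_j^2)$, and bound $\sum_{j\in W} p_j^2 \ge 1/|W|$ via Cauchy--Schwarz (the paper phrases this as the sum of squares being minimized at the uniform point, which is the same fact). The only cosmetic difference is that you compute the all-distinct contribution by summing directly over the second sample $j\neq i$, whereas the paper writes it as $p_i$ times the complement of ``second and third collide or one of them is $i$''; both give $p_i[(1-p_i)^2 - \sum_{j\ne i}p_j^2]$.
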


\noindent
The above upper bound easily   implies that 
  opinions whose supports fall  below the average $n/|W|$ decrease
   in expectation.   
   This expected drift  is a key-ingredient of our analysis and, as we will
   see in the next paragraph, it provides useful intuitions about the process.
On the other hand, when $\mathbf{c}$ is almost uniform, the above \emph{drift} 
turns out to be negligible and symmetry breaking is due to the inherent variance of the random process.

\smallskip
\noindent
\textbf{Failed attempts.}
When the 3-majority dynamics starts from      configurations that 
exhibit a large initial   support bias   between the largest and the second-largest 
opinions, the   approach  adopted in \cite{BCNPST13} successfully 
exploits the fact that the initial plurality is preserved throughout 
the evolution of the random process, with an expected  positive  
drift that is also preserved, w.h.p.  An intuition of this fact can be 
achieved from simple manipulations of (\ref{eq:average}). However, 
the aforementioned drift is only preserved if the largest opinion never changes 
(w.h.p.), \emph{no matter which the second-largest opinion is}: a 
condition that is not met by uniform configurations.
A promising attempt to cope with uniform configurations is to 
consider the r.v.  $S^{(t)}= C_{\imax(t)}^{(t)} - C_{\jmax(t)}^{(t)}$ 
where $\imax(t)$ and $\jmax(t)$ are the r.v.s   that take the 
index of (one of) the largest opinion and of (one of) the 
second-largest ones, respectively, in round $t$. For any  \emph{fixed} pair $i,j$, 
such that $c_i > c_j$, (\ref{eq:average}) implies that the   
difference $C^{(t+1)}_i-C^{(t+1)}_j$ in the next round is positive 
in expectation, so a suitable  submartingale argument \cite{lpw08} 
seemed to work in order to show that the system  (rather quickly) 
achieves  a ``sufficiently-large'' bias toward the plurality as to 
allow fast convergence. This approach  would work if  the 
\emph{random} indices  $\imax$ and $\jmax$ maintained their initial 
values across the entire duration of the process. Unfortunately, starting from uniform  
configurations, in the next round, the expected difference between 
the \emph{new} largest opinion and the \emph{new}  second largest 
one may have no positive drift at all. Roughly  speaking, in the next 
round,  the  r.v.  $C^{(t+1)}_{\jmax(t+1)}$ can be much larger than the r.v.
$C^{(t+1)}_{\imax(t)}$.
 
A promising dynamics for the stabilizing almost-consensus problem is the one introduced in \cite{DGMSS11}, 
in which nodes revise their 
opinions (assumed to be totally ordered) by taking the median between 
the currently held opinion and those held by two randomly 
sampled nodes. However, while we do not assume opinions to be  
integers (or totally ordered), their analysis strongly relies on the fact that the 
median opinion (or any good approximation of it) exhibits a strong increasing drift, 
even when starting from almost-uniform configuration, whereas no 
opinion is ``special'' to a majority rule when the starting configuration is uniform. 
The adoption of an inherently biased function as the median can have 
important consequences. To get an intuition, the reader 
may consider the following simple   instance: $\Sigma = \{1,2,3\}$, 
with the system starting in configuration $c_1= n/2, c_2=0, 
c_3= n/2$.  At the end of the first round, a static adversary 
changes the values  of $F = \log n$ nodes, equally distributed in 
$c_1$ and $c_3$, to value 2. The (non-valid) value 2 is the \emph
{global median} and some counting arguments show that, while 
values $1$ and $2$ have no positive expected drift, the median has 
an exponential expected drift that holds w.h.p. whenever $c_1,c_2 = 
\Theta(n)$. This might fool the system into the 
configuration in which $c_2 = n$, thus converging to a non-valid 
value. 
  
  \smallskip   
  \noindent
  \textbf{Our New Approach: An Overview.}
Our analysis  significantly departs from the above approaches.
It is important to remark that, for  $|\Sigma| \geq 3$,  no analysis of the 
3-majority dynamics with almost-uniform initial configurations is  
known, even in the simpler non-adversarial case.  
On the other hand, while simpler, the analysis of the non-adversarial 
case still has \emph{per-se} interest and it requires to address 
some of the main technical challenges that also arise in the adversarial 
case. Section \ref{se:noadv} will be thus  devoted to the analysis of the 
non-adversarial case, 
while an outline is given in the paragraphs that  follow. 

 When the   configuration is (approximately) uniform, Lemma 
 \ref{lem:average} says  us that 
the process exhibits no significant drift toward any 
\emph{fixed} opinion. Interestingly, things change if we consider the 
random variable $C_{\mathbbm{m}}^{(t)}$, indicating the smallest 
opinion support at round $t$. 
Let $j\leq k$ be the number of active opinions in 
a given round $t$,  we first   prove that the  expected value of $C_{\mathbbm{m}}^{(t)}$ 
always exhibits a non-negligible negative drift:

\begin{equation}\label{eq:cmindrift-road}
        \Expec{}{C_\mathbbm{m}^{(t+1)} \,|\, \mathbf{C}^{(t)} =
        \hat{\mathbf{c}}} \leqslant c_{\mathbbm{m}} - \varepsilon
        \frac{\sqrt{n}}{j^{3/2}} \, , \mbox{ for some constant } \varepsilon >0  \end{equation}
This drift  is  
essentially a consequence of Lemma \ref{lem:average} \emph{and} of 
the standard deviation of r.v.s  $C_{i}^{(t)}$s
 (see  the proof of Lemma \ref{lemma:breaking}). 
The analysis then proceeds along consecutive phases, each 
consisting of a suitable number of consecutive rounds. If the 
number of active opinions at the beginning of the generic phase is $j$, we 
prove that, with positive constant probability, $C_{\mathbbm{m}}^{(t)}$  
vanishes within the end of the phase, so that the next phase begins 
with (at most) $j-1$ active opinions. 

\noindent
We   clearly need   a  good   bound on the   length of a phase 
beginning with at most $j$ opinions. To 
this aim, we derive  
  a new upper  bound - stated in Lemma \ref{lemma:expectimewithdrift} - on the  \emph{hitting 
time} of stochastic processes with    expected drift  that are defined by    finite-state  Markov 
chains \cite{lpw08}. Thanks to this result, we can use  the negative drift in 
(\ref{eq:cmindrift-road}) to  
  prove that, from any configuration with $j \leqslant k$ active opinions,   
$C_{\mathbbm{m}}^{(t)}$  drops below the threshold $n/j - \sqrt{j n 
\log n}$ within $\mathcal{O}(\poly(j,\log n))$ rounds, with  
constant positive probability:   This ``hitting'' event represents     the exit condition 
 from    the \emph
{\em symmetry-breaking stage} of the phase. Indeed, once it  occurs,  we can consider
   \emph{any fixed} active opinion $i$  having   support size   $c_i$  below the above threshold
   (thanks to the previous stage, we know that 
   there is a good chance this opinion    exists): We then show that    $C_i$
has a         negative drift  of order $\Omega(c_i/j)$.  This allows us 
  to  prove that $C_i$ drops from
$n/j - \sqrt{j n 
\log n}$ to zero    within     $\mathcal{O}(\poly(j,\log n))$ further rounds,
 with positive constant probability.
This interval of rounds is the \emph {\em dropping stage} of the phase.  

\noindent
Ideally, the process proceeds along $k$ consecutive phases, indexed as  
$j= k, k-1, \ldots, 2$, such that we are left with at most $j-1$ active 
opinions at the end of Phase $j$. In practice, we only have a constant probability 
that at least one opinion disappears during Phase $j$. However, using 
standard probabilistic arguments, we can prove that, w.h.p., for every  $j$, the 
transition from $j$ to $j-1$ active opinions takes a constant (amortized) 
number of phases, each requiring $\mathcal{O}(\poly(j,\log n))$ rounds.

The presence of a dynamic, adaptive  adversary makes the above analysis 
technically more complex. A major issue is that 
a different definition of \emph{Phase} must be considered, since the adversary might permanently
feed any  opinion so that the latter    never dies. So the number of active opinions might not decrease
from one phase to the next one. Essentially, we need to manage the persistence of ``small'' (valid or not)
opinions: The end of a phase is now characterized by one ``big'' valid color that becomes ``small'' and, moreover, 
we need to show that, in general,  ``small'' colors never 
becomes ``big'', no matter what the dynamic  $F$-bounded adversary does.
An informal description of the dynamic-adversary case is given in Subsection \ref{ssec::dynadvs}.

\section{The 3-Majority Dynamics without Adversary}\label{se:noadv}
Let $\goodcols \subseteq \Sigma$ be the subset of valid opinions, i.e. those supported by at 
least one node in the initial configuration, and denote by $k = |\goodcols|$
its size. This section is devoted to the proof of the following result.

\begin{theorem}[The Adversary-Free Case.]
\label{theorem:final}
Starting from any initial configuration with $k \leqslant n^{1/3 - \varepsilon}$ active opinions, 
where $\varepsilon > 0$ is an arbitrarily-small constant, the 3-majority dynamics reaches
consensus within 
$\mathcal{O}\left( (k^{2} \log^{1/2} n + k \log n) (k + \log n) \right)$ rounds, w.h.p.
\end{theorem}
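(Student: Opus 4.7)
The plan is to implement the phase-based strategy outlined in Section \ref{sec:prely}. Let $J^{(t)}$ be the number of opinions with positive support at round $t$, so $J^{(0)} \leq k$. I would partition time into \emph{phases}, where in each phase we aim to reduce $J^{(t)}$ by one: a phase starting with $J^{(t)} \leq j$ succeeds if, by its end, $J^{(t)} \leq j-1$. Every phase has length $L_j = \mathcal{O}(\mathrm{poly}(j,\log n))$ and is itself split into a \emph{symmetry-breaking} stage and a \emph{dropping} stage. I would show each phase succeeds with constant probability, and then use a Chernoff-type amortization over $\mathcal{O}(k+\log n)$ independent phase attempts to ensure that the $k-1$ successes needed to push $J^{(t)}$ from $k$ down to $1$ all occur w.h.p. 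Summing the per-phase length over these attempts yields the desired bound $\mathcal{O}((k^{2}\sqrt{\log n}+k\log n)(k+\log n))$.

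For the symmetry-breaking stage I would analyze the random variable $C_{\mathbbm{m}}^{(t)}=\min_{i\,:\,C_i^{(t)}>0}C_i^{(t)}$ and establish the negative drift
\[
\expec{C_{\mathbbm{m}}^{(t+1)}}{\mathbf{C}^{(t)}=\mathbf{c}} \;\leq\; c_{\mathbbm{m}}-\varepsilon\frac{\sqrt{n}}{j^{3/2}}.
\]
Lemma \ref{lem:average} applied to a fixed opinion only provides the weak bias $c_i(c_i/n-1/j)$, which vanishes at the uniform configuration; the additional $\Omega(\sqrt{n}/j^{3/2})$ drift comes from the inherent fluctuation of each coordinate $C_i^{(t+1)}$, a sum of $n$ indicators of success probability $\approx 1/j$, hence with standard deviation $\Theta(\sqrt{n/j})$. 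Taking the minimum over $j$ coordinates shifts the expectation down by a further $\Theta(\sqrt{n}/j^{3/2})$, which produces the stated drift on $C_{\mathbbm{m}}$. Plugging this drift into the new hitting-time Lemma \ref{lemma:expectimewithdrift}, I would conclude that with constant probability $C_{\mathbbm{m}}^{(t)}$ falls below the threshold $n/j - \sqrt{jn\log n}$ within $\mathcal{O}(k^{2}\sqrt{\log n})$ rounds.

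For the dropping stage, once $C_{\mathbbm{m}}^{(t)}$ has dipped below the threshold, I would \emph{fix} one opinion $i$ with $c_i\leq n/j-\sqrt{jn\log n}$ and track $C_i^{(t)}$. Lemma \ref{lem:average} now yields a genuine multiplicative drift: once $c_i\leq n/(2j)$, the bound $\expec{C_i^{(t+1)}}{\mathbf{C}^{(t)}}\leq c_i(1+c_i/n-1/j)$ becomes $c_i(1-\Omega(1/j))$, i.e., geometric shrinkage at rate $\Omega(c_i/j)$. Combined with a Chernoff bound on the one-step fluctuation of the single binomial-like coordinate $C_i^{(t+1)}$, a submartingale argument (again via Lemma \ref{lemma:expectimewithdrift}) shows $C_i^{(t)}=0$ within a further $\mathcal{O}(j(k+\log n))$ rounds with constant probability; the $(k+\log n)$ factor absorbs the initial gap $\sqrt{jn\log n}$ down to $n/(2j)$ and then the subsequent exponential shrinkage to $0$. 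Once $C_i^{(t)}=0$, we have $J^{(t)}\leq j-1$ and the phase has succeeded.

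The main obstacle is the symmetry-breaking stage. The drift $\varepsilon\sqrt{n}/j^{3/2}$ is much smaller than the one-step fluctuation $\Theta(\sqrt{n/j})$ of $C_{\mathbbm{m}}^{(t)}$, so a direct concentration argument on a single trajectory is hopeless; one really needs Lemma \ref{lemma:expectimewithdrift} to convert a very small negative drift into a polynomial hitting time, and one must carefully handle the fact that the identity of the ``minimum'' opinion shifts over rounds, so the process is not a clean biased walk on a single coordinate. The hypothesis $k\leq n^{1/3-\varepsilon}$ enters precisely here: the hitting time for the minimum to cross the threshold scales as roughly $j^{5/2}$ rounds, and for the union bound over all $\mathcal{O}(k+\log n)$ phase attempts to retain $1/\mathrm{poly}(n)$ failure probability, this $j^{5/2}$ must remain well below $n/\log n$.
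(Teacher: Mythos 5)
Your high-level plan matches the paper's: phases indexed by the number of active opinions, each split into a symmetry-breaking stage (analyzed via the minimum $C_{\mathbbm{m}}$ and the new hitting-time Lemma \ref{lemma:expectimewithdrift}) and a dropping stage, with a Chernoff-type amortization over $\mathcal{O}(k+\log n)$ phase attempts. The symmetry-breaking analysis is essentially the paper's Lemma \ref{lemma:breaking}, although your attribution of the $\sqrt{n}/j^{3/2}$ drift entirely to the minimum-over-$j$-coordinates fluctuation is slightly off: in the near-uniform regime the fluctuation actually yields the stronger $\Theta(\sqrt{n/j})$ drift, and the $j^{3/2}$ factor in the final drift bound is determined by the complementary regime where the deterministic drift of Lemma \ref{lem:average} applies (with $\hat{c}_i \geq n/(2j)$).

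There is a genuine gap in your dropping stage. You propose to run Lemma \ref{lemma:expectimewithdrift} again on $C_i$ (say on $Y_t = n/j - C_i^{(t)}$) to conclude $C_i$ hits $0$. But Lemma \ref{lemma:expectimewithdrift} requires a \emph{uniform additive} drift $\lambda$ over the entire region $Y_t < m$, while the dropping-stage drift is \emph{multiplicative}: $\Expec{}{C_i^{(t+1)} \mid \mathbf{C}^{(t)}} \leq c_i(1 - \Omega(1/j))$, so the additive drift is $\Theta(c_i/j)$, which vanishes as $c_i \to 0$. Taking $\lambda = \Omega(1/j)$ (the minimum over $c_i \geq 1$) gives only $\Expec{}{\tau} = \mathcal{O}(\alpha m / \lambda) = \mathcal{O}(n)$, far worse than the required $\mathrm{poly}(j,\log n)$. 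The paper sidesteps this by not using the hitting-time lemma here at all: Lemma \ref{le:dying_noadv2} iterates the recursion $\Expec{\mathbf{c}}{C_i^{(t)}} \leq (1 - 1/(2j))\Expec{\mathbf{c}}{C_i^{(t-1)}} + e^{-n^{\varepsilon/2}}$ and applies Markov's inequality to get extinction within $\mathcal{O}(j\log n)$ rounds with constant probability, while Lemma \ref{le:dying_noadv} bridges the gap from $n/j - \sqrt{jn\log n}$ to $\mathcal{O}(j^2\log n)$ via iterated Chernoff bounds on the per-round multiplicative decay. To make your route rigorous you would have to decompose the dropping stage into geometric sub-phases $c_i \in [n/(2^{\ell}j), n/(2^{\ell-1}j)]$ (so the additive drift is roughly constant within each) — which is not stated — or switch to the paper's direct multiplicative-drift-plus-Markov argument. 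Relatedly, you claim the dropping stage takes $\mathcal{O}(j(k+\log n))$ rounds, but the argument you sketch does not deliver this; the paper proves the sharper $\mathcal{O}(j\log n)$ by a different route. Finally, your explanation of where the hypothesis $k \leq n^{1/3-\varepsilon}$ enters is incorrect: it is not about keeping the hitting time below $n/\log n$, but about the Chernoff concentration in the dropping stage — specifically, ensuring $\Prob{}{C_i^{(t+1)} > n/(2j) \mid \mathbf{C}^{(t)} = \hat{\mathbf{c}}} \leq e^{-\Theta(n^{\varepsilon})}$ when $\hat{c}_i \leq n/(2j)$, which requires the gap $\Theta(n/j^2)$ to be large relative to the standard deviation $\Theta(\sqrt{n/j})$, i.e., $j^3 \lesssim n^{1-\varepsilon}$.
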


\smallskip\noindent 
We first provide the lemmas required for the process analysis and then we give the formal proof of  the above theorem.

The next lemma shows an upper bound on the time it takes a stochastic process 
with values in $N = \{0,1,\dots, n\}$ to reach or exceed a target value $m$, under mild hypotheses on the process.
We here give only an idea of the proof, the full proof is in Appendix~\ref{sec:apx:prely}.

\begin{lemma}\label{lemma:expectimewithdrift}
Let $\{X_t\}_t$ be a Markov chain with finite state space $\Omega$, let $f : \Omega \rightarrow N$ be a function mapping states of the chain in non-negative integer numbers, and let $\{Y_t\}_t$ be the stochastic process over $N$ defined by $Y_t = f(X_t)$. Let $m \in N$ be a ``target value'' and let 
$$
\tau = \inf \{ t \in \mathbb{N} \,:\, Y_t \geqslant m \}
$$
be the random variable indicating the first time $Y_t$ reaches or exceeds value $m$.
Assume that, for every state $x \in \Omega$ with $f(x) \leqslant m-1$, it holds that
\begin{enumerate}
\item (Positive drift). $\Expec{}{Y_{t+1} \,|\, X_t = x} \geqslant f(x) + \lambda$ for some $\lambda > 0$ 
\item (Bounded jumps).  $\Prob{x}{Y_{\tau} \geqslant \alpha m} \leqslant \alpha m /n$, for some $\alpha > 1$.
\end{enumerate}
Then, for every starting state $x \in \Omega$, it holds that
$$
\Expec{x}{\tau} \leqslant 2 \alpha \frac{m}{\lambda}
$$
\end{lemma}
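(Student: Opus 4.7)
The natural approach is an optional-stopping argument applied to the process $M_t = Y_{t \wedge \tau} - \lambda (t \wedge \tau)$. Since the positive drift hypothesis applies exactly at every state $x$ with $f(x) \leqslant m-1$, i.e.\ at every state reachable strictly before $\tau$, we get for every $t$ that $\mathbb{E}[M_{t+1} \mid \mathcal{F}_t] \geqslant M_t$ on the event $\{t < \tau\}$, while on $\{t \geqslant \tau\}$ the process is frozen. Hence $\{M_t\}_t$ is a submartingale with respect to the natural filtration of $\{X_t\}$. If the starting state already satisfies $f(x) \geqslant m$, then $\tau = 0$ and the conclusion is immediate, so we may assume $f(x) \leqslant m-1$.

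The plan is to first establish that $\tau$ has finite expectation, and then sharpen the bound using the ``bounded jumps'' hypothesis. For the first step, I apply optional stopping to the bounded stopping time $\tau \wedge T$ to obtain
\begin{equation*}
\mathbb{E}_x[Y_{\tau \wedge T}] - \lambda\, \mathbb{E}_x[\tau \wedge T] \;=\; \mathbb{E}_x[M_{\tau \wedge T}] \;\geqslant\; M_0 \;=\; f(x) \;\geqslant\; 0.
\end{equation*}
Since $Y_t$ takes values in $\{0,1,\dots,n\}$, the left-hand side gives $\lambda\, \mathbb{E}_x[\tau \wedge T] \leqslant n$, and letting $T \to \infty$ via monotone convergence yields $\mathbb{E}_x[\tau] \leqslant n/\lambda < \infty$. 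Crucially, this already shows $\tau$ is almost-surely finite and integrable, which is the technical hypothesis needed to promote the optional-stopping identity to $\tau$ itself (together with the trivial bound $|M_{t+1} - M_t| \leqslant n + \lambda$ on the increments).

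The second step upgrades the trivial bound $\mathbb{E}_x[Y_\tau] \leqslant n$ to $\mathbb{E}_x[Y_\tau] \leqslant 2\alpha m$ by splitting on the value of $Y_\tau$ at hitting:
\begin{equation*}
\mathbb{E}_x[Y_\tau] \;\leqslant\; \alpha m \cdot \Prob{x}{Y_\tau < \alpha m} \;+\; n \cdot \Prob{x}{Y_\tau \geqslant \alpha m} \;\leqslant\; \alpha m + n \cdot \frac{\alpha m}{n} \;=\; 2\alpha m,
\end{equation*}
where I used assumption~(2) together with $Y_\tau \leqslant n$ deterministically. Applying optional stopping to $M_\tau$ directly gives $\mathbb{E}_x[Y_\tau] - \lambda \mathbb{E}_x[\tau] \geqslant f(x) \geqslant 0$, whence
\begin{equation*}
\mathbb{E}_x[\tau] \;\leqslant\; \frac{\mathbb{E}_x[Y_\tau]}{\lambda} \;\leqslant\; \frac{2\alpha m}{\lambda},
\end{equation*}
which is exactly the stated bound.

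The only real subtlety is the justification of optional stopping at the unbounded time $\tau$: I expect the main obstacle (or at least the only step where one must be careful) is making sure the two-step argument is clean, i.e.\ first deducing integrability of $\tau$ from a crude bound, and only then invoking optional stopping on $\tau$ proper together with the refined estimate on $\mathbb{E}_x[Y_\tau]$. Everything else is a mechanical consequence of the submartingale property and the two hypotheses.
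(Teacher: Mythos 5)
Your proof is correct and takes essentially the same route as the paper's: both treat $Y_t - \lambda t$ as a submartingale, invoke Doob's optional stopping at $\tau$ (using bounded increments), and then bound $\Expec{x}{Y_\tau} \leqslant 2\alpha m$ by splitting on the event $\{Y_\tau \geqslant \alpha m\}$ via Hypothesis~2. The only cosmetic difference is that you spell out the finiteness of $\Expec{x}{\tau}$ via the truncated stopping times $\tau \wedge T$, whereas the paper asserts this in passing.
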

\ideaproof
From Hypothesis \textit{1} it follows that $Z_t = Y_t - \lambda t$ is a \emph{submartingale} that satisfies the hypotheses of the Doob's \emph{Optional Stopping Theorem} \cite{Doob53} (see e.g. Corollary~17.8 in~\cite{lpw08} or Theorem~10.10 in~\cite{Williams91}), thus
\[
0 \leqslant f(x) = \Expec{x}{Z_0} \leqslant \Expec{x}{Z_{\tau}} = \Expec{x}{Y_\tau} - \lambda \Expec{x}{\tau}
\]
And from Hypothesis \textit{2} it follows that $\Expec{x}{Y_\tau} \leqslant 2 \alpha m$.
\qed

\medskip\noindent
We now exploit the above lemma
in order to bound the time required by the \emph{symmetry-breaking} stage.  

\begin{lemma}[Symmetry-breaking stage]\label{lemma:breaking}
Let $\mathbf{c}$ be any configuration with $j$ active opinions. Within $t = \mathcal{O}\left(j^2 \log^{1/2} n\right)$ rounds it holds that 
$$
\Prob{\mathbf{c}}{\exists i \mbox{ such that } C_i^{(t)} \leqslant n/j - \sqrt{j n \log n}} \geqslant \frac{1}{2}
$$
\end{lemma}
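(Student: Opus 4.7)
The plan is to (i) establish the negative-drift inequality~(\ref{eq:cmindrift-road}) on the minimum support $C_{\mathbbm m}^{(t)} = \min_i C_i^{(t)}$, (ii) feed it into the ``mirror'' of Lemma~\ref{lemma:expectimewithdrift} applied to the non-negative process $Y_t = n/j - C_{\mathbbm m}^{(t)}$ to bound $\Expec{\mathbf{c}}{\tau}$, where $\tau$ is the first round at which $C_{\mathbbm m}^{(t)} \leq n/j - \sqrt{jn\log n}$, and (iii) apply Markov's inequality to convert this expectation bound into the $1/2$-probability bound asserted in the lemma.

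For step~(i), I would split according to how close $c_{\mathbbm m}$ lies to the average $n/j$. Since $\min_i C_i^{(t+1)} \leq C_{\mathbbm m}^{(t+1)}$ (using the old index $\mathbbm m$), Lemma~\ref{lem:average} combined with the Cauchy--Schwarz estimate $\sum_{i \in W} c_i^2 \geq n^2/j$ yields
\[
\expec{C_{\mathbbm m}^{(t+1)}}{\mathbf{C}^{(t)} = \mathbf{c}} \;\leq\; c_{\mathbbm m}\!\left(1 + \tfrac{c_{\mathbbm m}}{n} - \tfrac{1}{j}\right) \;=\; c_{\mathbbm m} \;-\; c_{\mathbbm m}\!\left(\tfrac{1}{j} - \tfrac{c_{\mathbbm m}}{n}\right).
\]
A short calculation shows that the second factor equals $\delta/j - \delta^2/n$, where $\delta = n/j - c_{\mathbbm m}$. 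In the ``far'' regime $\delta \geq c\sqrt{n/j}$ (for a small absolute constant $c>0$) this deterministic term alone is $\Omega(\sqrt{n}/j^{3/2})$, provided $c_{\mathbbm m} = \Theta(n/j)$; otherwise the conclusion of the lemma already holds trivially. In the complementary ``close'' regime $\delta < c\sqrt{n/j}$ the deterministic drift is negligible, so I would invoke variance: because the nodes' samples are independent, each $C_i^{(t+1)}$ is a sum of $n$ independent indicators and hence $\mathrm{Var}(C_i^{(t+1)}) = \Theta(n/j)$; moreover, since the total excess $\sum_i (c_i - c_{\mathbbm m}) = n - j c_{\mathbbm m} \leq c\sqrt{jn}$ is small, a Markov-type counting argument shows that at least $j/2$ of the opinions have $c_i$ within $O(\sqrt{n/j})$ of $c_{\mathbbm m}$. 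A standard anti-concentration estimate for the minimum of such variables then gives $\expec{C_{\mathbbm m}^{(t+1)}}{\mathbf{C}^{(t)} = \mathbf{c}} \leq c_{\mathbbm m} - \Omega(\sqrt{n/j})$, which dominates the required $\Omega(\sqrt{n}/j^{3/2})$.

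With the drift in hand, applying Lemma~\ref{lemma:expectimewithdrift} to $Y_t$ (in its sign-mirrored form) with target $m = \sqrt{jn \log n}$ and $\lambda = \varepsilon \sqrt{n}/j^{3/2}$ yields $\Expec{\mathbf{c}}{\tau} \leq 2\alpha m/\lambda = O(j^2 \sqrt{\log n})$. The bounded-jumps hypothesis is easy: a one-step Chernoff bound on $C_{\mathbbm m}^{(t+1)} \mid \mathbf{C}^{(t)}$ (legitimate, again, by the independence of the $n$ node samples) shows that a single-round change of more than $\Theta(\sqrt{n\log n})$ has probability $n^{-\omega(1)}$, comfortably implying $\Prob{}{Y_\tau \geq 2m} \leq 2m/n$. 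Markov's inequality then upgrades this to $\Prob{\mathbf{c}}{\tau \leq 2 \Expec{\mathbf{c}}{\tau}} \geq 1/2$, as stated. I expect the main obstacle to be the variance estimate in the ``close'' regime: making the anti-concentration of $\min_i C_i^{(t+1)}$ rigorous in the presence of the negative correlations among the $C_i^{(t+1)}$ induced by the global constraint $\sum_i C_i^{(t+1)} = n$, and simultaneously controlling the (tiny) $O(\sqrt{n/j})$ upward slack of those supports $c_i$ that are not quite at the minimum.
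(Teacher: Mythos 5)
Your proposal follows essentially the same route as the paper: the mirrored process $Y_t = \lfloor n/j\rfloor - C_{\mathbbm{m}}^{(t)}$, a close/far case split for the negative drift (Lemma~\ref{lem:average} gives the deterministic drift away from uniformity, and a variance/anti-concentration argument handles the near-uniform case, with the same counting observation that $\sum_i (c_i-c_{\mathbbm{m}})$ is small), then Lemma~\ref{lemma:expectimewithdrift} and Markov's inequality. You also correctly identify the negative association of the $C_i^{(t+1)}$ as the delicate point in the anti-concentration step, which is exactly what the paper leans on.

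The one place you genuinely understate the difficulty is the bounded-jumps hypothesis. A one-step Chernoff bound on $C_{\mathbbm{m}}^{(t+1)}\,|\,\mathbf{C}^{(t)}$ does not by itself give $\Prob{\mathbf{c}}{Y_\tau\geqslant\alpha m}\leqslant \alpha m/n$: because $\tau$ is a stopping time, the natural decomposition $\sum_{t\geqslant1}\Prob{\mathbf{c}}{Y_t\geqslant\alpha m,\;\tau=t}$ has an unbounded range of summation, and each summand only carries a factor $\Prob{\mathbf{c}}{\tau\geqslant t}$, which must be shown to decay. The paper devotes Lemma~\ref{le:stophp} to this, pairing the per-step forward Chernoff estimate with a \emph{reverse} Chernoff bound showing that $\Prob{\mathbf{c}}{\bigwedge_{s<t} C_{\mathbbm{m}}^{(s)}>\lfloor n/j\rfloor-\sqrt{jn\log n}}$ decays geometrically in $t$; that geometric decay is what makes the sum converge to $O(1/n)$. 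Your sketch would need that second ingredient (or an independent a priori bound on $\Expec{\mathbf{c}}{\tau}$ — note you cannot just borrow $\Expec{\mathbf{c}}{\tau}=O(j^2\sqrt{\log n})$ from Lemma~\ref{lemma:expectimewithdrift}, since that is the very conclusion whose hypothesis you are trying to verify).
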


\skproof
Let $J$ be the set of $j$ active opinions in   $\mathbf{c}$ and let $\col^{(t)} = \left( C_i^{(t)}\,:\, i \in J \right)$ be the random variable indicating the opinion configuration at round $t$, where we assume $\col^{(0)} = \mathbf{c}$. Let $C_{\mathbbm{m}}^{(t)} = \min\left\{ C_i^{(t)} \,:\, i \in J \right\}$ be the minimum among  all $C_i^{(t)}$s and consider the stochastic process $\{Y_t\}_t$ defined as 
$
Y_t = \left\lfloor n/j\right\rfloor - C_{\mathbbm{m}}^{(t)}$.
Observe that $Y_t$ takes values in $\left\{0,1, \dots, \lfloor n/j \rfloor \right\}$ and it is a function of      
 $\col^{(t)}$. We are interested in the first time $Y_t$ becomes at least as large as $\sqrt{j n \log n}$, i.e.
$$
\tau = \inf \left\{ t \in \mathbb{N} \,:\, Y_t \geqslant \sqrt{j n \log n} \right\}
$$
We now show that $\{Y_t\}_t$ satisfies Hypotheses \textit{1} and \textit{2} of Lemma~\ref{lemma:expectimewithdrift},
with $\lambda = \varepsilon \sqrt{n}/j^{3/2}$, for a suitable constant $\varepsilon > 0$.

\smallskip\noindent
\textit{1.} Let $\hat{\mathbf{c}} = \left( \hat{c}_i \,:\, i \in J \right)$ be any configuration with $j$ active opinions such that $\hat{c}_{\mathbbm{m}} > n/j - \sqrt{j n \log n}$. We want to prove that
\begin{equation}\label{eq:cmindrift}
\Expec{}{C_\mathbbm{m}^{(t+1)} \,|\, \mathbf{C}^{(t)} = \hat{\mathbf{c}}} \leqslant c_{\mathbbm{m}} - \varepsilon \frac{\sqrt{n}}{j^{3/2}}
\end{equation}

\noindent
Two cases may arise.

\smallskip\noindent
\underline{Case $\hat{c}_{\mathbbm{m}} > n/j - 2 \varepsilon \sqrt{n/j}$:} 
  Observe that, in this case, 
  r.v.s $\left\{C_i^{(t+1)} \,:\, i \in J \right\}$ conditional on $\{C^{(t)} = \hat{\mathbf{c}} \}$  have standard deviation  $\Omega\left(\sqrt{n/j}\right)$. Moreover, they
   are binomial and negatively associated.  Hence,  by choosing $\varepsilon$ small enough, 
   from the Central Limit Theorem we have that
$$
\Prob{}{i \in J \mbox{ exists such that } C_i^{(t+1)} \leqslant \frac{n}{j} - 6 \varepsilon \cdot \sqrt{\frac{n}{j}}} \geqslant 1/2
$$
We thus get 
\begin{equation}\label{eq:cmincase2}
\Expec{}{C_\mathbbm{m}^{(t+1)} \,|\, \mathbf{C}^{(t)} = \hat{\mathbf{c}}} 
\leqslant \frac{1}{2} \left( \frac{n}{j} - 6 \varepsilon \cdot \sqrt{\frac{n}{j}} \right) + \frac{1}{2} \cdot \frac{n}{j} 
= \frac{n}{j} - 3 \varepsilon \sqrt{\frac{n}{j}} 
\leqslant c_{\mathbbm{m}} - \varepsilon \sqrt{\frac{n}{j}} 
\leqslant c_{\mathbbm{m}} - \varepsilon \frac{\sqrt{n}}{j^{3/2}}
\end{equation}

\smallskip\noindent
\underline{Case $\hat{c}_{\mathbbm{m}} \leqslant n/j - 2 \varepsilon \sqrt{n/j}$:} 
Equation \eqref{eq:cmindrift} easily follows from Lemma \ref{lem:average}. Indeed, let $i \in J$ be an opinion such that $\hat{c}_i = \hat{c}_{\mathbbm{m}}$, then
\begin{eqnarray}
\Expec{}{C_\mathbbm{m}^{(t+1)} \,|\, \mathbf{C}^{(t)} = \hat{\mathbf{c}}} 
& \leqslant & \Expec{}{C_i^{(t+1)} \,|\, \mathbf{C}^{(t)} = \hat{\mathbf{c}}} 
\leqslant \hat{c}_i \left( 1 + \frac{\hat{c}_i}{n} - \frac{1}{j} \right) \nonumber \\
& \leqslant & \hat{c}_i \left( 1 - \frac{2\varepsilon}{\sqrt{nj}} \right) 
\leqslant \hat{c}_i - \frac{\varepsilon\sqrt{n}}{j^{3/2}} 
= \hat{c}_{\mathbbm{m}} - \varepsilon \frac{\sqrt{n}}{j^{3/2}} \label{eq:negdrif2}
\end{eqnarray}
where  we used the case's condition   and the fact that    $\hat{c}_i = \hat{c}_{\mathbbm{m}} \geqslant n/(2j)$.

\smallskip\noindent
\textit{2.} Since random variables $\left\{C_i^{(t+1)} \,:\, i \in J \right\}$
conditional on
the configuration at round $t$ are binomial, it is possible to apply Chernoff
bound (though with some care) to prove that
\begin{equation}\label{eq:hypjumpnew}
\Prob{\mathbf{c}}{Y_{\tau} \geqslant \alpha \sqrt{j n \log n}} \leqslant
\frac{1}{n}, \
\mbox{ for some constant } \ \alpha > 1
\end{equation}
Though this result seems intuitive, its formal proof is less obvious,
since $\tau$ is a stopping time and thus itself a random variable.
Lemma \ref{le:stophp} in Appendix \ref{app::nonadv} offers a formal proof of
the above statement.

\smallskip\noindent
From \eqref{eq:cmindrift} and \eqref{eq:hypjumpnew}, we have that $\{Y_t\}_t$ satisfies the hypotheses of~Lemma~\ref{lemma:expectimewithdrift} with $m = \sqrt{j n \log n}$ and $\lambda = \varepsilon \sqrt{n}/j^{3/2}$. Hence $\Expec{\mathbf{c}}{\tau} < j^2 \sqrt{\log n}$ and, from Markov inequality, for $t = 2 j^2 \sqrt{\log n}$, we finally get 
$$
\Prob{\mathbf{c}}{ \forall \   i \in J   :  C_i^{(t)} \geqslant n/j - \sqrt{j n \log n}} \leqslant \Prob{\mathbf{c}}{\tau > 2 j^2 \sqrt{\log n}} \leqslant \frac{1}{2}
$$
\qed

\medskip\noindent
We  now  provide the analysis of the \emph{dropping} stage: More precisely, we show  that, if the system starts with 
 up to  $j$ active  opinions and one of them (say $i$) is   below the threshold 
$n/j - \sqrt{j n \log n}$, then $i$  drops to   the smaller threshold $j^2 \log n$  within $\mathcal{O}(j \log n)$ additional rounds.  
This bound can be proved w.h.p. since, in this regime, 
$C_i$ is still  sufficiently large to apply the Chernoff bound. This concentration result
is not  necessary to the purpose of 
proving Theorem \ref{theorem:final}, while it is a key ingredient in 
the analysis of the adversarial case (Theorem \ref{theorem:final-adv}).  
The next   lemma can be proved by standard concentration arguments - applied in an iterative way -
on the r.v. $C_i^{(t)}$ (see  Appendix \ref{app::nonadv}).

\begin{lemma}[Dropping stage 1]\label{le:dying_noadv}
Let $\mathbf{c}$ be any configuration with $j \leqslant n^{1/3 - \varepsilon}$
active opinions, where $\varepsilon > 0$ is an arbitrarily-small positive constant,
and such that an opinion $i$ exists with $c_i \leqslant n/j - \sqrt{j n \log
n}$. Within $t = \mathcal{O}(j \log n)$ rounds opinion $i$ becomes
$\mathcal{O}\left(j^2 \log n\right)$ w.h.p.
\end{lemma}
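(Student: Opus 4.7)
The plan is to track $X_t := C_i^{(t)}$ and to show that it drops in two consecutive sub-phases: first down to $n/(2j)$, then down to $\mathcal{O}(j^2\log n)$. The two key ingredients are Lemma~\ref{lem:average}, which yields the expected drift $\expec{X_{t+1}}{X_t = x} \leq x\bigl(1 + x/n - 1/j\bigr)$, and the observation that, conditionally on the configuration at round $t$, the new value $X_{t+1}$ is a sum of $n$ independent Bernoulli trials (one per node, since each node samples independently), so multiplicative/additive Chernoff bounds are available.

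\textbf{Sub-phase I: from $X_0 \leq n/j - \sqrt{jn\log n}$ to $X_t \leq n/(2j)$.} Setting $Y_t := n/j - X_t$ (so that $Y_0 \geq \sqrt{jn\log n}$), Lemma~\ref{lem:average} gives $\expec{Y_{t+1}}{Y_t} \geq Y_t\bigl(1 + 1/j - Y_t/n\bigr) \geq Y_t\bigl(1 + 1/(2j)\bigr)$ while $Y_t \leq n/(2j)$, so $Y_t$ grows in expectation at a geometric rate $1+\Theta(1/j)$ and reaches $n/(2j)$ in $\mathcal{O}\bigl(j\log(n/(jY_0))\bigr) = \mathcal{O}(j\log n)$ expected rounds. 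To upgrade this to a high-probability statement I would group rounds into $\mathcal{O}(\log n)$ consecutive \emph{blocks} of $\Theta(j)$ rounds each: in block $s$ (starting with $Y \approx 2^s Y_0$), the cumulative expected multiplicative growth is a constant strictly greater than $1$, while the cumulative martingale noise, controlled by Azuma/Freedman with per-round variance $\Var{}{Y_{t+1}\mid Y_t} = \mathcal{O}(n)$, is of order $\mathcal{O}(\sqrt{jn\log n}) = \mathcal{O}(Y_0)$, dominated by the $\Omega(2^s Y_0)$ growth for every $s\ge 0$ once the block-length constants are chosen large enough. So $Y_t$ at least doubles per block w.h.p., and after $\mathcal{O}(\log n)$ blocks we reach $Y_t \geq n/(2j)$.

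\textbf{Sub-phase II: from $X_t \leq n/(2j)$ down to $X_t = \mathcal{O}(j^2\log n)$.} Once $X_t \leq n/(2j)$, Lemma~\ref{lem:average} gives the stronger \emph{multiplicative} drift
\[
\expec{X_{t+1}}{X_t} \;\leq\; X_t\bigl(1 - 1/(2j)\bigr).
\]
A round-by-round multiplicative Chernoff bound then yields $X_{t+1} \leq X_t\bigl(1 - 1/(4j)\bigr)$ with probability $1 - 1/\poly(n)$, \emph{provided} $X_t = \Omega(j^2\log n)$, which is exactly the regime where the per-round deviation $\mathcal{O}(\sqrt{X_t\log n})$ is dominated by the drift $X_t/(2j)$. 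Iterating this contraction and union-bounding over the $\mathcal{O}(j\log n)$ rounds of the sub-phase brings $X_t$ down to $\mathcal{O}(j^2\log n)$ w.h.p.\ (or else $X_t$ has already crossed this threshold, in which case we are done).

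\textbf{Main obstacle.} The technical crux is Sub-phase~I. A round-by-round Chernoff bound is not sharp enough at the very start ($X_t \approx n/j$), because the per-round drift $\Theta(\sqrt{n\log n/j})$ and the per-round standard deviation $\Theta(\sqrt{n/j})$ of $Y_{t+1}\mid Y_t$ are of the same order, so one cannot even conclude $Y_{t+1}\geq Y_t$ w.h.p.\ in a single round. The argument must therefore accumulate drift over blocks of $\Omega(j)$ rounds and compare it to the block-level martingale noise. The hypothesis $j \leq n^{1/3-\varepsilon}$ is used precisely at this step to guarantee that $Y_0 = \sqrt{jn\log n}$ is large enough for the accumulated drift to dominate the accumulated noise w.h.p., and it is also what makes $\log(n/(j^3\log n)) = \Theta(\log n)$, so that Sub-phase~II likewise finishes within $\mathcal{O}(j\log n)$ rounds.
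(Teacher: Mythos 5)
Your approach differs from the paper's in a way that is worth comparing, but there is also a genuine misdiagnosis at the step you identify as ``the technical crux.''

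\textbf{The misdiagnosis.} You claim that at the start of the stage (when $c_i\approx n/j-\sqrt{jn\log n}$) the per-round drift $\Theta\!\bigl(\sqrt{n\log n/j}\bigr)$ and the per-round standard deviation $\Theta\!\bigl(\sqrt{n/j}\bigr)$ are ``of the same order,'' and hence that a round-by-round Chernoff bound cannot even conclude $Y_{t+1}\ge Y_t$ w.h.p. These two quantities are \emph{not} of the same order: they differ by a factor $\sqrt{\log n}$, and that factor is precisely what makes a one-round Chernoff bound close. If $\psi:=1/j-c_i/n$, Lemma~\ref{lem:average} gives an expected drop of at least $c_i\psi$, the conditional standard deviation is $O(\sqrt{c_i})$, and the (multiplicative) Chernoff exponent for the event $\{C_i^{(t+1)}>c_i(1-\psi/2)\}$ is $\Theta(\psi^2 c_i)$. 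Writing $x=jc_i/n$, one has $\psi^2 c_i=\frac{n}{j^3}\,x(1-x)^2$, and a direct check shows this is $\Omega(\log n)$ at \emph{both} ends of the window $j^2\log n\le c_i\le n/j-\sqrt{jn\log n}$ (the endpoint $c_i=n/j-\sqrt{jn\log n}$ gives exactly $\Theta(\log n)$), and in fact throughout, because $x(1-x)^2$ is unimodal on $(0,1)$ with its minimum over the relevant interval attained at one of the endpoints. This is exactly the calculation the paper performs in \eqref{eq:hp-noadv}: a single round-by-round Chernoff estimate covers the whole range, with no block-accumulation argument and no explicit ``Sub-phase~I / Sub-phase~II'' split.

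\textbf{What this means for your argument.} The two-sub-phase decomposition (track $Y_t=n/j-C_i^{(t)}$ down to $c_i\le n/(2j)$, then switch to tracking $C_i^{(t)}$ directly) is not wrong — in fact the paper's adversarial analogue, Lemma~\ref{le:dying-adv}, does make this split explicit — but for the present lemma it is not needed. More importantly, the Azuma/Freedman block-accumulation machinery you introduce in Sub-phase~I is motivated by the incorrect drift-vs-deviation comparison above. If you do want to keep it, note also that the martingale increments $C_i^{(t+1)}-\mathbf{E}[C_i^{(t+1)}\mid\cdot]$ are only deterministically bounded by $n$, not by $O(\sqrt{n})$, so the claimed block-level noise bound $O(\sqrt{jn\log n})$ does not follow directly from Freedman with per-round variance $O(n)$ and needs an extra truncation step. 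Your Sub-phase~II reasoning is fine and matches the paper's multiplicative-contraction argument once $c_i\le n/(2j)$.
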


\smallskip\noindent 
In the next lemma we prove that once $c_i$ becomes smaller than $n/(2j)$, 
then opinion $i$ disappears within further $\mathcal{O}(j \log n)$ rounds with constant probability.
We here give only an idea of the proof, the full proof is in Appendix~\ref{app::nonadv}.

\begin{lemma}[Dropping stage 2]\label{le:dying_noadv2}
Let $\mathbf{c}$ be any configuration with $j \leqslant n^{1/3 - \varepsilon}$ active opinions, where $\varepsilon > 0$ is an arbitrarily-small positive constant, and such that an opinion $i$ exists with $c_i \leqslant n/(2j)$. Within $t = \mathcal{O}(j \log n)$ rounds opinion $i$ disappears with probability at least $1/2$.
\end{lemma}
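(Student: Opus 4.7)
The plan is to exploit the multiplicative negative drift that Lemma~\ref{lem:average} supplies whenever $C_i$ stays below $n/(2j)$. Since in the adversary-free process the set of active opinions is non-increasing, so that $|W^{(s)}| \leqslant j$ for every $s$, for any configuration $\hat{\mathbf{c}}$ with $\hat c_i \leqslant n/(2j)$ the upper bound in~\eqref{eq:average} yields
\[
\expec{C_i^{(t+1)}}{\mathbf{C}^{(t)} = \hat{\mathbf{c}}} \;\leqslant\; \hat c_i \left(1 + \frac{\hat c_i}{n} - \frac{1}{j}\right) \;\leqslant\; \hat c_i \left(1 - \frac{1}{2j}\right).
\]
If this bound could be chained in every round, iterating it for $t = \Theta(j \log n)$ rounds, the tower rule would give $\Expec{\mathbf{c}}{C_i^{(t)}} \leqslant c_i \bigl(1 - 1/(2j)\bigr)^t \leqslant 1/4$; since $C_i^{(t)}$ is integer-valued, Markov's inequality would then yield $\Prob{\mathbf{c}}{C_i^{(t)} \geqslant 1} \leqslant 1/4$, so opinion $i$ would have disappeared with probability at least $3/4$.

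The subtle point is that the drift holds only as long as $C_i^{(s)} \leqslant n/(2j)$ at every $s \leqslant t$: this is true at $s = 0$ by hypothesis, but \emph{a priori} could be spoiled by an unusually large upward fluctuation. To control this, I would introduce the stopping time
\[
\sigma \;=\; \inf \bigl\{ s \geqslant 0 \,:\, C_i^{(s)} > n/(2j) \bigr\},
\]
apply the drift argument above to the stopped process $C_i^{(t \wedge \sigma)}$ (so that the Markov bound holds on the event $\{\sigma > t\}$), and separately show $\Prob{\mathbf{c}}{\sigma \leqslant t} = o(1)$. Conditional on any configuration $\hat{\mathbf{c}}$ with $\hat c_i \leqslant n/(2j)$, the variable $C_i^{(s+1)}$ is a sum of $n$ independent indicators (one per node, each depending only on that node's independent sample of three neighbors), with mean at most $n/(2j) - n/(4j^2)$ and variance $O(n/j)$; a Chernoff/Bernstein bound therefore yields
\[
\Prob{}{C_i^{(s+1)} > \tfrac{n}{2j} \,\mid\, \mathbf{C}^{(s)} = \hat{\mathbf{c}}} \;\leqslant\; \exp\!\left(-\Omega(n/j^3)\right),
\]
and a union bound over the $t = O(j \log n)$ rounds, combined with the assumption $j \leqslant n^{1/3 - \varepsilon}$ (which forces $n/j^3 \geqslant n^{3\varepsilon}$), gives $\Prob{\mathbf{c}}{\sigma \leqslant t} = n^{-\omega(1)}$.

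Combining the two estimates,
\[
\Prob{\mathbf{c}}{C_i^{(t)} = 0} \;\geqslant\; 1 - \Prob{\mathbf{c}}{C_i^{(t)} \geqslant 1,\, \sigma > t} - \Prob{\mathbf{c}}{\sigma \leqslant t} \;\geqslant\; 1 - \tfrac{1}{4} - o(1) \;\geqslant\; \tfrac{1}{2}.
\]
The main technical obstacle is precisely the control of upward excursions giving $\Prob{\mathbf{c}}{\sigma \leqslant t} = o(1)$: it is here, and only here, that the hypothesis $j \leqslant n^{1/3 - \varepsilon}$ enters, to guarantee that the gap $n/(4j^2)$ between the conditional one-step mean and the threshold polynomially dominates the one-round binomial standard deviation $O(\sqrt{n/j})$, producing a super-polynomially small per-round escape probability.
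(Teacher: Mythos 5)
Your proof is correct and essentially matches the paper's: the same multiplicative drift from Lemma~\ref{lem:average}, the same use of $j \leqslant n^{1/3-\varepsilon}$ to make the per-round Chernoff bound on escapes above $n/(2j)$ super-polynomially small, and the same finish via Markov's inequality on the expectation of $C_i^{(t)}$. The only difference is bookkeeping: the paper folds the escape events into an additive $e^{-n^{\varepsilon/2}}$ term in a recursion on $\Expec{\mathbf{c}}{C_i^{(t)}}$, whereas you isolate them with an explicit stopping time $\sigma$ and a union bound over the $O(j\log n)$ rounds.
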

\ideaproof
If $c_i \leqslant n/(2j)$ in configuration $\mathbf{c}$, then from Lemma \ref{lem:average} it follows that
\[
  \Expec{}{C_i^{(t+1)} \,|\, \mathbf{C}^{(t)} = \mathbf{c}} \leqslant c_i \left(1 - \frac{1}{2j} \right)
\]
Moreover, since $C_i^{(t+1)}$ conditional on $\left\{\mathbf{C}^{(t)} = \mathbf{c}\right\}$ is binomial, if $j \leqslant n^{1/3-\varepsilon}$, from the Chernoff bound it follows that
$\Prob{}{C_i^{(t+1)} > n/(2j) \,|\, \mathbf{C}^{(t)} = \mathbf{c}} \leqslant e^{-\Theta(n^{\varepsilon})}$. Hence, it is easy to check that for any initial configuration $\mathbf{c}$ with $c_i \leqslant n/(2j)$ the following recursive relation holds
\[
\Expec{\mathbf{c}}{C_i^{(t)}} \leqslant \left(1 - \frac{1}{2j} \right) \Expec{\mathbf{c}}{C_i^{(t-1)}} + e^{-n^{\varepsilon/2}} 
\]
that for some $t = \mathcal{O}(j \log n)$ gives $\Expec{\mathbf{c}}{C_i^{(t)}} \leqslant 1/2$. Since $C_i^{(t)}$ is a non-negative integer-valued r.v., the thesis then follows from the Markov inequality.
\qed

\medskip\noindent
\textbf{Proof of Theorem \ref{theorem:final}}. From Lemmas \ref{lemma:breaking}, \ref{le:dying_noadv}, and \ref{le:dying_noadv2} it follows that from any configuration with $j \leqslant k$ active opinions, within $\mathcal{O}(k^2 \sqrt{\log n} + k \log n)$ rounds at least one of the opinions disappears with probability at least $1/4$. Thus, within $\mathcal{O}((k^2 \sqrt{\log n} + k \log n) (k + \log n))$ rounds, all opinions but one disappear w.h.p.
\qed

\section{Convergence Time of 3-Majority with Adversary}

In this section we consider the presence of a Byzantine adversary that can
adaptively change the opinion of a bounded number of nodes in order to delay
convergence time toward a valid consensus, or even worse, to let the system
converge toward a non valid one. We consider two different adversarial
strategies: A static one and a stronger, dynamic one.

\subsection{The \texorpdfstring{$F$}{F}-static adversary}
At the end of the first round, once every node has fixed his own initial
opinion, the adversary looks at the configuration and arbitrarily replaces
the opinion of at most $F = n/k-\sqrt{kn\log n}$ nodes with an arbitrary opinion
in $\Sigma$. Then the protocol starts the process and no further adversary's
actions  are allowed. Since any opinion the adversary may introduce has size 
less than $n/k-\sqrt{k n \log n}$, as a simple consequence of the 
dropping stage 
(see  Lemmas \ref{le:dying_noadv} and \ref{le:dying_noadv2}), the static adversarial case easily reduces to the
non-adversarial one. We thus get the following 

\begin{corollary}
	Let $ k \leq n^{\alpha}$ for some constant $\alpha<1$ and $F = n/k-\sqrt{kn\log n}$. Starting from any initial configuration having
	$k$ opinions, the  3-majority protocol  reaches a stabilizing
	almost-consensus in presence of any $F$-static adversary within
	$\mathcal{O}\left( k^2 \sqrt{\log n} + k\log n \right)(k + \log n)$ rounds, w.h.p. 
 \label{cor:static-adv}
\end{corollary}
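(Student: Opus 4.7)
The plan is to reduce the $F$-static case to the non-adversarial setting of Theorem \ref{theorem:final}, using the observation that the bound $F \leqslant n/k - \sqrt{k n \log n}$ forces every opinion the adversary may introduce to be born already below the ``survival threshold'' of the dropping-stage machinery. After the single adversarial perturbation at the end of round $1$, the configuration contains at most $k$ originally valid opinions plus at most $F$ non-valid opinions, each of support at most $F = n/k - \sqrt{k n \log n}$, and no further adversarial action occurs.

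First I would apply Lemma \ref{le:dying_noadv} to each non-valid opinion $i$: because $c_i^{(1)} \leqslant F = n/k - \sqrt{k n \log n}$, the hypothesis is satisfied with $j \leqslant k$, so within $\mathcal{O}(k \log n)$ rounds opinion $i$ shrinks to $\mathcal{O}(k^2 \log n)$ w.h.p. Since $k^2 \log n \leqslant n/(2k)$ under the assumption $k \leqslant n^{\alpha}$ for sufficiently small $\alpha$, Lemma \ref{le:dying_noadv2} then applies and opinion $i$ vanishes within $\mathcal{O}(k \log n)$ further rounds with constant probability; because no adversary is acting after round $1$, no non-valid opinion can resurrect once killed. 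Amortizing over phases exactly as in the proof of Theorem \ref{theorem:final} and combining constant-probability successes by standard Chernoff bounds, within $\mathcal{O}((k^2 \sqrt{\log n} + k \log n)(k + \log n))$ rounds w.h.p. all non-valid opinions have disappeared. At this point only valid opinions remain, and Theorem \ref{theorem:final} applied to the residual configuration yields the announced consensus (and hence validity) within the same asymptotic time.

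The main subtlety to be careful about is that in the first rounds the total number of active opinions $j^{(1)} \leqslant k + F$ could exceed the cap $n^{1/3-\varepsilon}$ required by the dropping-stage lemmas. This is not a real obstacle because the aggregate adversarial mass $F$ is strictly less than one ``slot'' $n/k$ of the uniform configuration, so in the drift identity of Lemma \ref{lem:average} the contribution of the non-valid opinions is dominated by that of the valid ones: each non-valid opinion $i$ still satisfies $\mathbf{E}[C_i^{(t+1)} \mid \mathbf{C}^{(t)}] \leqslant c_i (1 - \Theta(1/k))$, and the Chernoff-type arguments inside Lemmas \ref{le:dying_noadv} and \ref{le:dying_noadv2} go through with $j$ replaced by $k$. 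Combined with the fact that only valid opinions can survive the dropping stages, this gives the convergence bound stated in the corollary.
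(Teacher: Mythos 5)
Your reduction --- observe that every adversary-injected opinion is born at or below the threshold $n/k - \sqrt{kn\log n}$, then invoke the dropping-stage Lemmas \ref{le:dying_noadv} and \ref{le:dying_noadv2}, followed by Theorem \ref{theorem:final} on the residual configuration --- matches the sentence the paper gives just before the corollary, so at the skeleton level you and the paper agree. Where you go further is in explicitly flagging that, after the static perturbation, the number of active opinions can vastly exceed the cap $n^{1/3-\varepsilon}$ required by the dropping-stage lemmas, and in proposing a fix; that flag is legitimate.

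The fix, however, does not hold as stated. You claim that, because $F < n/k$, every non-valid opinion $i$ satisfies $\expec{C_i^{(t+1)}}{\mathbf{C}^{(t)}} \leq c_i\left(1 - \Theta(1/k)\right)$. From Lemma~\ref{lem:average} the multiplicative decay factor is $\sum_\ell c_\ell^2/n^2 - c_i/n$, and Cauchy--Schwarz applied to the $k$ valid opinions (whose joint mass is at least $n-F$) only gives $\sum_\ell c_\ell^2/n^2 \geq (1-F/n)^2/k$. When $c_i$ sits near its ceiling $n/k - \sqrt{kn\log n}$, the decay factor is therefore of order $\sqrt{k\log n/n} - \Theta(1/k^2)$, which is far below $\Theta(1/k)$ and is in fact \emph{negative} whenever $k^5\log n = o(n)$: an adversary that siphons the $F$ nodes uniformly from the valid opinions gives a non-valid opinion a positive drift and, for small $k$, can even make it the plurality. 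Furthermore, the $\mathcal{O}(j\log n)$ bound of Lemma~\ref{le:dying_noadv} is not derived from a uniform $1/j$-rate decay but from the geometric amplification of the gap $n/j - C_i$, an argument calibrated to a configuration with exactly $j$ active opinions of comparable size; re-deriving it when many tiny injected opinions coexist with deflated valid ones requires something like the big/small-opinion bookkeeping the paper develops only for the dynamic adversary. The ``domination'' heuristic does not by itself close this gap.
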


\subsection{The \texorpdfstring{$F$}{F}-dynamic adversary} 
 \label{ssec::dynadvs}
 The actions of this    adversary  over the studied process can be described  as follows.
At the end of \emph{every round} $t$, after nodes have updated their opinions 
(i.e. once the configuration $ \mathbf{C}^{(t)} = \mathbf{c}^{(t)}$ is realized),
the $F$-dynamic adversary looks at  the current opinion configuration and replaces
the opinion of up to $F$ nodes with any opinion in $\Sigma$. 
 
In what follows we consider an $F$-dynamic adversary with $F \leqslant 
\adv$ for a suitable positive constant $\beta$.  
As we will show in the proof of Lemma \ref{le:breaking-adv}, this  bound on $F$ turns
out to be almost tight for guaranteeing that the process converges to an 
almost-consensus regime in polynomial time, w.h.p.

The presence of the adversary requires us to distinguish between valid and 
non valid opinions.
So, we recall that the  set of valid opinions $\goodcols \subseteq \Sigma$ is the subset of active 
opinions in the initial configuration
and  we observe that, in the reminder of this section, 
$k$ denotes  the number of valid opinions, i.e.,  $k : = |\goodcols|$.

We are now ready to state our main result  in the presence of the dynamic
adversary (its full proof is given in Appendix   \ref{sec:adv}). 

\begin{theorem}[The Dynamic-Adversary Case.]
\label{theorem:final-adv}
Let $k \leq n^{\alpha}$  for some constant  $\alpha <1 $  and $F = \adv$ for some constant   $\beta>0$.
Starting from any initial configuration having $k$ opinions, the  3-majority
    reaches a (valid) stabilizing almost-consensus in presence of   any
$F$-dynamic adversary within 
$O( (k^{2} \sqrt{\log n} + k \log n ) (k+\log n) )$
rounds, w.h.p.
\end{theorem}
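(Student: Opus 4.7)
The plan is to follow the three-stage schema of Theorem~\ref{theorem:final} (symmetry-breaking, dropping stage 1, dropping stage 2, iterated over $k$ phases) but with every stage re-analyzed to absorb the per-round perturbation by the adversary. The guiding quantitative observation is that $F = \beta\sqrt{n}/(k^{5/2}\log n)$ is a factor $\Theta(1/(k\log n))$ smaller than the ``natural'' drifts $\Omega(\sqrt{n}/j^{3/2})$ and $\Omega(c_i/j)$ that power the symmetry-breaking and dropping stages whenever $j\leq k$. Since the adversary's contribution is additive while the drifts are proportional to the current population of an opinion, this slack allows the 3-majority drift to dominate over the adversary's push across a $\poly(n)$ horizon.

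First I would partition opinions into \emph{big} ($c_i \geq n/j - \sqrt{jn\log n}$) and \emph{small} ones, where $j$ denotes the current number of big valid opinions, and prove an \emph{anti-revival} lemma: with high probability, over any $\poly(n)$ window and against any adaptive $F$-dynamic adversary, no opinion whose support becomes small ever climbs back above the big threshold. This is the structural fact that replaces the clean ``opinions that die stay dead'' behavior of the adversary-free case. For small $c$, Lemma~\ref{lem:average} gives an expected downward contribution of order $c/j$, the adversary adds at most $F$, and the excursion of the resulting biased process can be controlled by a supermartingale argument combined with Chernoff-type concentration of the per-round increments, yielding a $\omega(\poly(n))$ hitting time for the big threshold.

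With anti-revival in hand, I revisit Lemma~\ref{lemma:breaking}. Let $j$ be the number of big \emph{valid} opinions and let $\mathbbm{m}$ index a smallest-support big valid opinion; I re-run the Lyapunov argument of Lemma~\ref{lemma:expectimewithdrift} on $Y_t = \lfloor n/j \rfloor - C_{\mathbbm{m}}^{(t)}$, using the same two-case split (CLT when $c_{\mathbbm{m}}$ lies within $O(\sqrt{n/j})$ of $n/j$, Lemma~\ref{lem:average} otherwise). Since the adversary subtracts at most $F = o(\sqrt{n}/(k^{3/2}\log n))$ from each per-round drift, a net drift of $\Omega(\sqrt{n}/j^{3/2})$ is preserved; anti-revival guarantees that the set of big valid opinions cannot grow during the phase, so with constant probability some big valid opinion falls below $n/j - \sqrt{jn\log n}$ in $O(j^2\sqrt{\log n})$ rounds. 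I would then analogously modify Lemmas~\ref{le:dying_noadv} and~\ref{le:dying_noadv2}: as long as $c_i \geq n/(2j)$ the drift $\Theta(c_i/j)$ dwarfs $F$, and the same iterative Chernoff recursion (now with an extra additive $+F$ term) brings $C_i$ below $n/(2j)$ in $O(j\log n)$ rounds w.h.p., after which anti-revival keeps it small forever.

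The main obstacle is the anti-revival lemma itself, which must hold simultaneously for all opinions across $\poly(n)$ rounds against an adaptive adversary that is free to funnel all $F$ perturbations into whichever small opinion looks most threatening at that round. The delicate regime is $c_i \ll n/j$, where the $-c_i/j$ drift from Lemma~\ref{lem:average} is too small to beat $F$ on its own; one must either bound the per-round increment $C_i^{(t+1)} - C_i^{(t)}$ by $O(F + \sqrt{c_i \log n})$ and argue that climbing to the big threshold would require $\Omega(\sqrt{jn\log n}/F) = \omega(\polylog n)$ rounds, ruled out by a union bound over a supermartingale excursion, or work with a potential that penalizes upward drift uniformly. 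Once anti-revival is established, the phase argument mirrors Theorem~\ref{theorem:final}: in each of $O(k+\log n)$ super-phases of length $O(k^2\sqrt{\log n} + k\log n)$, with constant probability one big valid opinion is demoted to small and trapped there, producing the claimed $O((k^2\sqrt{\log n}+k\log n)(k+\log n))$ convergence time w.h.p.
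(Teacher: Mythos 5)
Your high-level plan — preserve the phase/drift schema of the non-adversarial proof, argue that the per-round drift dominates the adversary's additive $F$-push, and couple this with an ``anti-revival'' lemma that traps demoted opinions — is exactly the architecture of the paper's proof (Lemmas \ref{lem:average-adv}--\ref{le:dying-adv}). However, your instantiation has a calibration error and two structural gaps.

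The small/big threshold is wrong, and the error is not cosmetic. You place it at $n/j - \sqrt{jn\log n}$, which is the \emph{symmetry-breaking} target, i.e.\ $\Theta(n/j)$. The paper's boundary is $\gamma\sqrt{n}/k^{3/2}$ with $\gamma > \beta$, which is orders of magnitude lower and is calibrated precisely to the adversary's power $F = \beta\sqrt{n}/(k^{5/2}\log n)$: at $c_i \approx \gamma\sqrt{n}/k^{3/2}$ the drift $\Theta(c_i/k)$ just overtakes $F$, which is the unique scale at which ``small stays small'' becomes provable. With your choice, (i) the definition is circular in $j$; (ii) there is a wide band $[\gamma\sqrt{n}/k^{3/2},\, n/j - \sqrt{jn\log n}]$ in which your anti-revival says nothing, yet opinions there contribute non-negligibly to $\sum c_\ell^2$ and corrupt the $1/j$ drift used in the symmetry-breaking stage — this is precisely why the paper re-derives Lemma \ref{lem:average} as Lemma \ref{lem:average-adv} under the explicit hypothesis $\sum_{i \in \bar{\mathcal{C}}}\tilde c_i \leq \gamma\sqrt{n}/k^{3/2}$; and (iii) the threshold shifts when $j$ drops, so ``trapped below it forever'' is not even a well-posed invariant across phases. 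The paper fixes the boundary at $\gamma\sqrt{n}/k^{3/2}$ and maintains the two invariants $\sum_{i\in\bar{\mathcal C}}\tilde C_i^{(t)} \leq \gamma\sqrt{n}/k^{3/2}$ and $S^{(t)}\subseteq S^{(t+1)}$ round-by-round via Chernoff (Lemma \ref{le:hyp-H}); that is both the anti-revival statement and the validity guarantee.

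You also cannot literally ``re-run the Lyapunov argument of Lemma~\ref{lemma:expectimewithdrift}'': that lemma is stated for Markov chains, and the adaptive adversary — who acts as a function of the full history — destroys the Markov property of $\{\tilde{\mathbf{C}}^{(t)}\}_t$. The paper therefore proves a history-dependent variant (Lemma \ref{lemma:expectimewithdrift-adv}) where $Y_t = f_t(X_0,\dots,X_t)$ and, crucially, the stopping time $\tau$ also fires upon exit from a safe set $A$ encoding the anti-revival invariant; without this the positive-drift hypothesis fails once the invariant is violated. Finally, your anti-revival justification is heuristic: a lower bound $\Omega(\sqrt{jn\log n}/F)$ on climbing time does not bound the probability of a successful excursion, and in fact $\sqrt{jn\log n}/F = \Theta(k^{5/2}\sqrt{j}\,\log^{3/2}n)$ is only $\polylog n$ when $k = \polylog n$, so it is not even $\omega(\polylog n)$ over the claimed range $k \le n^\alpha$. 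The correct route, as in Lemma \ref{le:hyp-H}, is a one-step Chernoff bound on $C_i^{(t+1)}$ and on $\sum_{i\in\bar{\mathcal C}} C_i^{(t+1)}$ showing the invariant is re-established w.h.p.\ at every round, and then a union bound over $\poly(n)$ rounds.
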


\begin{proof}[Idea of the  Proof]
We here provide a description of the main technical differences w.r.t.
the analysis  for the non-adversarial case.

\noindent
As discussed in the overview of the process analysis, the adversary can introduce  ``small''    non-valid opinions  
  and  it can keep  small  valid opinions active  that would otherwise disappear (as shown in Section \ref{se:noadv}).
These facts lead us to  the problem of  managing ``small'' opinions: The rigorous definition of
small opinion   
is determined by    the minimal negative drift for $C_{\mathbbm{m}}^{(t)}$ we derived 
in the proof of Lemma \ref{lemma:breaking} (see (\ref{eq:negdrif2})).

\noindent
Let 
$ \bad:=\{ i\ \big|\ c_{i}\leq\smallsize\} $ 
be the set of \emph{small opinions} 
for some constant $\smallsizeconst > \advconst$, 
and let its complement
$B:=\bar{\bad} = \{ i\ \big|\ c_{i}>\smallsize\} $
be the set of \emph{big opinions}.
 
 \noindent
 It turns out that we cannot
use the definition of the (end of) phase adopted in the non-adversarial case: At least one (valid) opinion
 dies. Wlog, let us assume that, at the beginning,  all the $k$ valid opinions 
are big. Then the new phase $j$ is  an interval of consecutive rounds, in each of which
exactly $j$ big valid opinions are present. The new  goal is to show that at the end of phase $j$, one of the $j$ big colors will get 
small and, moreover, this color (and no other small color) will never get  big.  In the symmetry-breaking stage of each phase, 
we thus need to show that the negative drift of $C_{\mathbbm{m}}^{(t)}$
 (notice that the latter 
 now denotes the minimum    among  the
  $j$  big colors) cannot be opposed by the actions of the $F$-dynamic adversary, provided
that $F \leq \adv$. This fact  (stated in 
Lemma \ref{le:breaking-adv}) is obtained via  two different technical steps:  
i) A new bound on the expected   negative drift for $C_{\mathbbm{m}}^{(t)}$ that  considers both   
  the presence of small good opinions and the adversary's opposing action (this result is formalized in 
  Lemma \ref{lem:average-adv});
  ii) A novel use of Lemma \ref{lemma:expectimewithdrift} on the hitting time of random processes  
   in order to  bound the expect time of   the symmetry-breaking stage. We in fact need to define a new 
  stopping condition that also includes some ``bad'' event: Some small (valid or not)  color become big.
    We then 
        show
       that    bad stopping events  never happen along the entire process, w.h.p. (this is essentially guaranteed 
       by     Lemma \ref{le:hyp-H}).
       
      \noindent
      The dropping stage of phase $j$   is now defined as the interval of rounds in which 
      $C_{\mathbbm{m}}^{(t)}$ drops from the symmetry-breaking   threshold $n/j - \sqrt{j n \log n}$ 
           to the size of small colors i.e. $\smallsize$. Similarly to the non-adversary case, we can here fix 
           the big opinion $i$ that is dropped below the symmetry-breaking   threshold and look at its 
           negative drift derived from     Lemma \ref{lem:average-adv}. The drift is strong enough to tolerate the 
             the actions of the $F$-bounded adversary and implies an $O(j \log n)$   bound  on the time required
             by this second stage of phase $j$. This stage's analysis is given in Lemma \ref{le:dying-adv}.

\noindent
Finally,   after $k$ phases, we are left with one
(valid) opinion that accounts for $n - \mathit{O}(\sqrt{n})$ nodes, while the
remaining nodes can have any (possibly non valid) opinion and reflect the
presence of the adversary. In fact, this is what happens with high probability.

\end{proof}

\section{Future Works}
We strongly believe that our  upper bound on the convergence time of the 3-majority dynamics is not tight w.r.t.
  $k$. The factor    $\Omega(k^3)$  seems     to be not  necessary:
 We believe that  at least a factor $k$ can be saved. To this aim, 
 we would need to show that  ``more''  opinions get small during a  phase.  This number should  also depend on the
 current number   of big   colors.  Another idea would be that of (also) considering the growth of the maximal
  opinion.
  Unfortunately, differently from the minimal  opinion (see (\ref{eq:cmindrift-road}) in Section \ref{sec:prely}), we don't have 
 any   good bound on the  expected drift   for the maximal opinion that holds from  \emph{any} configuration. So, we don't see
 how to efficiently adapt our approach without this crucial ingredient.

\newpage
\bibliographystyle{plain}
\bibliography{sb}

\newpage
\appendix

\section{Preliminary Results}\label{sec:apx:prely}

\paragraph{Proof of Lemma \ref{lem:average}}
According to the $3$-majority protocol, a node $u$ gets opinion $i$ if it chooses 3 times opinion $i$, or if it chooses two times $i$ and one time a different opinion, or if it chooses the first time opinion $i$ and then, the second and third time, two different distinct opinions.
Hence, if we denote by $X_{i,u}^{(t)}$ the indicator random 
variable of the event ``Node $u$ gets opinion $i$ at time $t$'', we 
have that

\vspace{-2mm}
\begin{small}
\begin{eqnarray*}
  \Prob{}{X_{i,u}^{(t+1)}=1 \,|\, \mathbf{C^{(t)}} = \mathbf{c}}
& = & \left(\frac{c_{i}}{n}\right)^{3}+3\left(\frac{c_{i}}{n}\right)^{2}\left(\frac{n-c_{i}}{n}\right)+\left(\frac{c_{i}}{n}\right)\left[1-\left(\frac{\sum_{\ell \in S}^{k}c_{\ell}^{2}}{n^{2}}+2\left(\frac{c_{i}}{n}\right)\left(\frac{n-c_{i}}{n}\right)\right)\right]\\
& = & \left(\frac{c_{i}}{n^{3}}\right)\left(n^{2}+c_{i}n-\sum_{\ell \in S}^{k}c_{\ell}^{2}\right)
\end{eqnarray*}
\end{small}

\vspace{-2mm}\noindent
Then the   inequality in \eqref{eq:average}  is obtained by observing that the sum 
$\sum_{\ell \in S} c_{\ell}^{2}$ is minimized for $c_{\ell}  = n/|S|$.

\qed

\paragraph{Proof of Lemma \ref{lemma:expectimewithdrift}}

Consider the stochastic process $Z_t = Y_t - \lambda t$ and observe that for any state $x \in \Omega$ with $f(x) \leqslant \border -1$ it holds that
\begin{eqnarray*}
\Expec{}{Z_{t+1} \,|\, X_t = x} & = & \Expec{}{Y_{t+1} \,|\, X_t = x} - \lambda (t+1) \\
& \geqslant & f(x) + \lambda - \lambda (t+1) \\
& \geqslant & f(x)  - \lambda t
\end{eqnarray*}
where in the inequality we used Hypotheses \textit{1}. Thus $Z_t$ is a \emph{submartingale} up to the stopping time $\tau$, i.e. $\Expec{}{Z_{t+1} \,|\, X_t} \geqslant Z_t$ for any $t < \tau$. Moreover, since $|Y_t| \leqslant n$ the \emph{jumps} of $Z_t$ can be bounded by a value independent of $t$
$$
|Z_{t+1} - Z_t| = \left| Y_{t+1} - \lambda(t+1) - Y_t + \lambda t \right| \\
\leqslant n + \lambda
$$
and it is easy to see that Hypotheses~\textit{1} implies $\Expec{x}{\tau} < \infty$, thus we can apply \emph{Doob's Optional Stopping Theorem}~\cite{Doob53} (see also, e.g., Corollary~17.8 in~\cite{lpw08} and Theorem~10.10 in~\cite{Williams91}). It then follows that $\Expec{x}{Z_{\tau}} \geqslant \Expec{x}{Z_0} = f(x)$ and, since $\Expec{x}{Z_{\tau}} = \Expec{x}{Y_{\tau}} - \lambda \Expec{x}{\tau}$, we have that
$$
\Expec{x}{\tau} \leqslant \frac{\Expec{x}{Y_{\tau}} - f(x)}{\lambda} 
\leqslant \frac{\Expec{x}{Y_{\tau}}}{\lambda}
$$
Finally, we get
\begin{eqnarray*}
\Expec{0}{Y_{\tau}} & = & \sum_{j = 1}^n j \Prob{0}{Y_{\tau} = j} \\
& = & \sum_{j = 1}^{\lfloor \alpha \border  \rfloor} j \Prob{0}{Y_{\tau} = j} + \sum_{j = \lfloor \alpha \border  \rfloor + 1}^n j \Prob{0}{Y_{\tau} = j} \\
& \leqslant & \left( \alpha \border  \right) + n \Prob{0}{Y_{\tau} > \alpha \border } \leqslant 2 \left( \alpha \border  \right)
\end{eqnarray*}
where in the last inequality we used Hypothesis~\textit{2}.
\qed

 \section{Proofs for the Non-Adversarial Case} \label{app::nonadv}
 
 \paragraph{Proof of Lemma \ref{le:dying_noadv}}
We first prove that the decreasing rate of $C_i$ depends on its value at the
end of the previous round.
More formally, if we are in a configuration satisfying the hypotheses of
the lemma:

\[
\Prob{}{C_{i}^{(t)}>c^{(t-1)}_{i}\left(1-\frac{1}{2}\left(\frac{1}{j}-\frac{c^{(t-1)}_{i}}{n}\right)\right)}
= \Prob{}{C_{i}^{(t)}>c^{(t-1)}_{i}\left(1-\left(\frac{1}{j}-\frac
{c^{(t-1)}_{i}}{n}\right)\right)(1 + \delta)}, 
\]
where
$ \delta = {\frac{1}{2}(\frac{1}{j}-\frac{c^{(t-1)}_{i}}{n})} 
/ {1-(\frac{1}{j}-\frac
{c^{(t-1)}_{i}}{n})}$

Using Lemma \ref{lem:average} and applying Chernoff bound we have:
\begin{eqnarray}
    \Prob{}{C_{i}^{(t)}>c^{(t-1)}_{i}\left(1-\frac{1}{2}
    \left(\frac{1}{j}-\frac{c^{(t-1)}_{i}}{n}\right)\right)} 
    &\leq& \exp\left\{
        -\frac{\delta^2}{3}\left(1-\left(\frac{1}{j} 
        - \frac{c^{(t-1)}_{i}}{n}\right)\right)c^{(t-1)}_{i}\right\}
        \nonumber\\
    &=& \exp\left\{ 
    -\frac{\delta}{3}\left(\frac{1}{2}\left(\frac{1}{j} 
    - \frac{c^{(t-1)}_{i}}{n}\right)\right)c^{(t-1)}_{i}\right\}\nonumber\\
    &<& \exp\left\{ 
    -\frac{1}{3}\left(\frac{1}{2}\left(\frac{1}{j} 
    - \frac{c^{(t-1)}_{i}}{n}\right)\right)^2c^{(t-1)}_{i}\right\}\nonumber\\
    &=& n^{-\Theta\left(1\right)}.
\label{eq:hp-noadv}
\end{eqnarray}
The second equality in \eqref{eq:hp-noadv} follows from the definition of $\delta$, 
while the third inequality follows by (upper) bounding the denominator 
of $\delta$ by $1$, which is always possible since $c_i/n - 1/j < 0$ 
from the hypotheses. Finally, to prove the last equality, we used the fact that   
 $c_i \geqslant j^2 \log n$ and that the function $x\left(1-x\right)^{2}$
is decreasing iff $x\in\left(1/3,1\right)$, with $x=jc_{i}/n$.

Finally, we  can iteratively apply \eqref{eq:hp-noadv} as long as we have at 
most $j$  active opinions and  $C^{(t)}_i$ keeps not smaller than  $j^2 \log n$. 
By standard concentration arguments we get that the time to reach this threshold is 
$\mathcal{O}\left(j \log n\right)$,  w.h.p.
\qed

\paragraph{Proof of Lemma \ref{le:dying_noadv2}}

Let $J$ be the set of active opinions. By conditioning on all the configurations $\hat{\mathbf{c}} = \left( \hat{c}_\ell \,:\, \ell \in J \right)$ that the system can take at round $t-1$, we can bound the expectation of $C_i^{(t)}$ as follows
\begin{eqnarray*}
\Expec{\mathbf{c}}{C_i^{(t)}} 
& = & \sum_{\hat{\mathbf{c}}} \Expec{}{C_i^{(t)} \,|\, \mathbf{C}^{(t-1)} = \hat{\mathbf{c}}} \Prob{\mathbf{c}}{\mathbf{C}^{(t-1)} = \hat{\mathbf{c}}} \\
& \leqslant & \left(1 - \frac{1}{2j} \right) \sum_{\hat{\mathbf{c}} \,:\, \hat{c}_i \leqslant n/(2j)} \hat{c}_i \cdot \Prob{\mathbf{c}}{\mathbf{C}^{(t-1)} = \hat{\mathbf{c}}} + n \cdot \sum_{\hat{\mathbf{c}} \,:\, \hat{c}_i > n/(2j)} \Prob{\mathbf{c}}{\mathbf{C}^{(t-1)} = \hat{\mathbf{c}}} \\
& \leqslant & \left(1 - \frac{1}{2j} \right) \Expec{\mathbf{c}}{C_i^{(t - 1)}} + n \cdot \Prob{\mathbf{c}}{C_i^{t-1} > \frac{n}{2j}}
\end{eqnarray*}
where we used that, for any configuration $\hat{\mathbf{c}}$ with $\hat{c_i} \leqslant n/(2j)$, Lemma \ref{lem:average} gives the bound $\Expec{}{C_i^{(t)} \,|\, \mathbf{C}^{(t-1)} = \hat{\mathbf{c}}} \leqslant \hat{c}_i \left(1 - \frac{1}{2j} \right)$.
Moreover, if $j \leqslant n^{1/3 - \varepsilon}$, from Chernoff bound it follows that
$$
\Prob{}{C_i^{(t)} > \frac{n}{2j} \,|\, \mathbf{C}^{(t-1)} = \hat{\mathbf{c}}} \leqslant e^{-\Theta\left(n^\varepsilon\right)}
$$
for any such configuration $\hat{\mathbf{c}}$. Hence, for any $t$ we have that $\Prob{\mathbf{c}}{C_i^{(t)} > \frac{n}{2j}} \leqslant t e^{- \Theta(n^{\varepsilon})}$. Indeed,
\begin{eqnarray*}
\Prob{\mathbf{c}}{C_i^{(t)} > \frac{n}{2j}} & \leqslant & \Prob{\mathbf{c}}{\exists \bar{t} = 1, \dots, t \,:\, C_i^{(\bar{t})} > \frac{n}{2j} \wedge C_i^{(\bar{t}-1)} \leqslant \frac{n}{2j}} \\
& \leqslant & \sum_{\bar{t} = 1}^{t} \Prob{\mathbf{c}}{C_i^{(\bar{t})} > \frac{n}{2j} \wedge C_i^{(\bar{t}-1)} \leqslant \frac{n}{2j}} \\
& = & \sum_{\bar{t} = 1}^{t} \, \sum_{\hat{\mathbf{c}} \,:\, \hat{c}_i \leqslant n/(2j)} \Prob{}{C_i^{(\bar{t})} > \frac{n}{2j} \,|\, \mathbf{C}^{(\bar{t}-1)} = \hat{\mathbf{c}}} \Prob{\mathbf{c}}{\mathbf{C}^{(\bar{t}-1)} = \hat{\mathbf{c}}} \\
& \leqslant & t e^{-\Theta\left( n^\varepsilon \right)}
\end{eqnarray*}
Thus for any $t = \poly(n)$ the following recursive relation holds
$$
\Expec{\mathbf{c}}{C_i^{(t)}} \leqslant \left(1 - \frac{1}{2j} \right) \Expec{\mathbf{c}}{C_i^{(t-1)}} + e^{-n^{\varepsilon/2}} 
$$
And it gives
$$
\Expec{\mathbf{c}}{C_i^{(t)}} \leqslant \left( 1 - \frac{1}{2j} \right)^t \frac{n}{2j} + e^{-n^{\varepsilon/3}}
$$
Hence, for $t = 2 j (\log n + 1)$ we have that $\Expec{\mathbf{c}}{C_i^{(t)}} \leqslant 1/2$ and
 since $C_i^{(t)}$ takes non-negative integer values, the thesis follows from Markov inequality.
\qed

\begin{lemma}\label{le:stophp}
Let $\mathbf{c}$ be any configuration with $j$ active opinions.  
Consider the stochastic process $\{Y_t\}_t$ defined as 
$Y_t = \left\lfloor\frac{n}{j}\right\rfloor - C_{\mathbbm{m}}^{(t)}$ 
and define the stopping time $\tau = \inf \left\{ t \in \mathbb{N} 
\,:\, Y_t \geqslant \sqrt{j n \log n} \right\}$. Then:
\[
\Prob{\mathbf{c}}{Y_\tau > \alpha\sqrt{jn\log n}}\le\frac{1}{n}.
\]
\end{lemma}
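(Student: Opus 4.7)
The main idea is to exploit the stopping-time structure of $\tau$ together with a Chernoff-bound argument applied to each single-step transition. Since $\tau$ is the first time $Y_t$ reaches $m := \sqrt{j n \log n}$, the event $\{Y_\tau > \alpha m\}$ forces an anomalously large one-step drop of some opinion's support: at round $\tau - 1$ all $j$ active opinions satisfy $C_i^{(\tau-1)} > \lfloor n/j\rfloor - m$, yet at round $\tau$ some $C_i^{(\tau)} < \lfloor n/j\rfloor - \alpha m$.

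My plan is to fix a polynomial horizon $T$ and decompose
\[
\Prob{\mathbf{c}}{Y_\tau > \alpha m} \leq \Prob{\mathbf{c}}{\tau > T} + \sum_{t=1}^{T} \sum_{i \in J} \Prob{\mathbf{c}}{C_i^{(t-1)} > n/j - m \text{ and } C_i^{(t)} < n/j - \alpha m}.
\]
The tail $\Prob{\mathbf{c}}{\tau > T}$ is controlled directly from the positive-drift property established in the proof of Lemma~\ref{lemma:breaking}: the process $Z_t = Y_{\tau \wedge t} - \lambda(\tau \wedge t)$ is a bounded submartingale, so by the optional stopping theorem $\Expec{\mathbf{c}}{\tau} \leq n/\lambda$, and Markov's inequality yields $\Prob{\mathbf{c}}{\tau > T} \leq 1/(2n)$ for $T = 2n^2/\lambda$. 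Crucially this step does \emph{not} invoke Lemma~\ref{lemma:expectimewithdrift}, so there is no circularity.

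For each summand I would condition on a configuration $\hat{\mathbf{c}}$ at round $t-1$ with all $\hat{c}_\ell > n/j - m$. Conditional on $\hat{\mathbf{c}}$, the variable $C_i^{(t)}$ is $\mathrm{Binomial}(n, p_i)$ with $n p_i = \mu_i = \hat{c}_i(1 + \hat{c}_i/n - \sum_\ell \hat{c}_\ell^2/n^2)$. A short calculation using $\sum_\ell \hat{c}_\ell^2 / n^2 \leq \max_\ell \hat{c}_\ell / n$ and the fact that each $\hat{c}_\ell \in [n/j - m,\, n/j + (j-1)m]$ gives $\mu_i \geq \hat{c}_i - o(m)$ in the regime $j = O(n^{1/3 - \varepsilon})$. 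The Chernoff lower-tail inequality for binomials then yields
\[
\Prob{}{C_i^{(t)} < n/j - \alpha m \,\big|\, \hat{\mathbf{c}}} \leq \exp\!\left(-\frac{((\alpha - 1 - o(1))\,m)^2}{2n}\right) = \exp\!\left(-\Omega(\alpha^2\, j \log n)\right),
\]
which is at most $n^{-c}$ for any prescribed constant $c$, provided $\alpha$ is chosen a sufficiently large absolute constant. Union-bounding over the $j \leq n$ choices of $i$ and the $T = \poly(n)$ values of $t$ keeps the double sum below $1/(2n)$, completing the estimate.

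The main obstacle is precisely the presence of the stopping time: Chernoff is a fixed-time statement, so one must pay a union bound over $\poly(n)$ possible values of $\tau$. This is affordable only because each fixed-time probability is super-polynomially small once $\alpha$ is large, and because the positive-drift hypothesis gives a polynomial tail on $\tau$ without appealing to the lemma we are feeding into. A secondary technical point is to verify that the one-step expected drift of each $C_i$ is only $o(m)$ in the near-uniform regime, so that the Chernoff slack $(\alpha - 1)m$ survives the bias between $\mu_i$ and $\hat{c}_i$.
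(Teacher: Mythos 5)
Your proposal is correct in its essentials, and it takes a genuinely different route from the paper. Both proofs start from the same observation -- that a bad outcome $\{Y_\tau > \alpha m\}$ with $m = \sqrt{jn\log n}$ requires some $C_i$ to make an anomalously large one-step jump below $\lfloor n/j\rfloor - \alpha m$ from a configuration where every opinion was still above $\lfloor n/j\rfloor - m$ -- and both use a Chernoff lower-tail estimate for that one-step jump. The difference is how the infinite sum over possible values of $\tau$ is controlled. The paper decomposes over $\tau = t$ and shows that $\Prob{\mathbf{c}}{\tau > t}$ decays geometrically: via a reverse Chernoff bound it proves that each round, conditioned on not yet having hit the threshold, hits with probability at least $e^{-4\beta j^2 \log n}$, so the series $\sum_t (1-e^{-4\beta j^2\log n})^{t-1}e^{-\Omega(\alpha^2 j^2 \log n)}$ sums to a quantity that can be driven below $1/(jn)$ by choosing $\alpha$ large. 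You instead truncate at a polynomial horizon $T = 2n^2/\lambda$, bound $\Prob{\mathbf{c}}{\tau > T}\le 1/(2n)$ by an optional-stopping / Markov estimate that only uses $Y_\tau \le n$ (and hence is self-contained, no circularity), and union-bound the $T\cdot j$ per-round Chernoff events. Your route avoids the reverse Chernoff bound entirely, at the cost of needing the slightly wasteful horizon $T$; since the per-round probabilities are $n^{-\Omega(j)}$ for suitably large $\alpha$, the polynomial union bound is absorbed. This is a legitimate simplification of the paper's argument.

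One inaccuracy worth correcting: your claim that $\mu_i \ge \hat{c}_i - o(m)$ in the near-uniform regime is too strong. Taking the extremal configuration with $\hat{c}_i = n/j - m$, all but one other opinion at $n/j - m$, and one opinion absorbing the slack $n/j + (j-1)m$, one finds
\[
\mu_i - \hat{c}_i \;=\; \hat c_i\Bigl(\frac{\hat c_i}{n} - \sum_\ell \frac{\hat c_\ell^2}{n^2}\Bigr) \;=\; -\frac{m}{j} - \Theta\!\left(\frac{j^2 m^2}{n}\right) \;=\; -\frac{m}{j} - o(m),
\]
which is $\Theta(m)$ for $j = O(1)$ (e.g. it is $-m/2 - o(m)$ for $j = 2$). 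This does not break the argument -- the Chernoff slack is $(\alpha - 1 - 1/j - o(1))m$ rather than $(\alpha - 1 - o(1))m$, so you simply need $\alpha$ a somewhat larger absolute constant -- but the $o(m)$ claim as written is false. With that bookkeeping fixed, the rest of the computation (the $\exp(-\Omega(\alpha^2 j\log n))$ bound, the union over $t \le T$ and $i \in J$, and the choice of a large absolute constant $\alpha$) goes through and yields the stated $1/n$ bound.
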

\begin{proof}
First of all, $\Expec{\mathbf{c}}{\tau} < \infty$, since 
$C_{\mathbbm{m}}^{(t)}$ has a negative drift (see the proof of Lemma 
\ref{lemma:breaking}). Next, from the definition of $Y_t$:
\begin{eqnarray*}
  \Prob{\mathbf{c}}{Y_\tau > \alpha\sqrt{jn\log n}} & = & 
\Prob{\mathbf{c}}{C_{\mathbbm{m}}^{(\tau)} < 
\left\lfloor\frac{n}{j}\right\rfloor - \alpha\sqrt{j n \log n}} \\
& = & \Prob{\mathbf{c}}{\exists\ell: C_{\ell}^{(\tau)} < 
\left\lfloor\frac{n}{j}\right\rfloor - \alpha\sqrt{j n \log n}} \\
& \le & \sum_{\ell = 1}^j\Prob{\mathbf{c}}{C_{\ell}^{(\tau)} <
\left\lfloor\frac{n}{j}\right\rfloor - \alpha\sqrt{j n \log n}}.
\end{eqnarray*}
Next, from the definition of the stopping time $\tau$:
\begin{small}
\begin{eqnarray}
&&\Prob{\mathbf{c}}{C_{\ell}^{(\tau)} <
\left\lfloor\frac{n}{j}\right\rfloor - \alpha\sqrt{j n \log n}} = \sum_{t = 
1}^\infty\Prob{\mathbf{c}}{C_{\ell}^{(t} <
\left\lfloor\frac{n}{j}\right\rfloor - \alpha\sqrt{j n \log n}\bigwedge\tau 
= t} = \nonumber\\
&& = \sum_{t = 1}^\infty\Prob{\mathbf{c}}{C_{\ell}^{(t} <
\left\lfloor\frac{n}{j}\right\rfloor - \alpha\sqrt{j n \log 
n}\bigwedge C_{\mathbbm{m}}^{(t)} \le
\left\lfloor\frac{n}{j}\right\rfloor - \sqrt{j n \log n} \Bigg\vert
\bigwedge_{s = 1}^{t-1}C_{\mathbbm{m}}^{(t)} >
\left\lfloor\frac{n}{j}\right\rfloor - \sqrt{j n \log n}}\nonumber\\
&& \quad\cdot\Prob{\mathbf{c}}{\bigwedge_{s = 1}^{t-1}C_{\mathbbm{m}}^{(t)} 
> \left\lfloor\frac{n}{j}\right\rfloor - \sqrt{j n \log n}} = \nonumber\\
&& = \sum_{t = 1}^\infty\Prob{\mathbf{c}}{C_{\ell}^{(t} <
\left\lfloor\frac{n}{j}\right\rfloor - \alpha\sqrt{j n \log 
n}\Bigg\vert
\bigwedge_{s = 1}^{t-1}C_{\mathbbm{m}}^{(t)} >
\left\lfloor\frac{n}{j}\right\rfloor - \sqrt{j n \log n}} \nonumber\\
&& \quad\cdot\Prob{\mathbf{c}}{\bigwedge_{s = 1}^{t-1}C_{\mathbbm{m}}^{(t)} >
\left\lfloor\frac{n}{j}\right\rfloor - \sqrt{j n \log n}}\label{eq:9}
\end{eqnarray}
\end{small}
where the last equality follows since $\left (C_{\ell}^{(t} <
\left\lfloor\frac{n}{j}\right\rfloor - \alpha\sqrt{j n \log 
n}\right )$ implies $\left(C_{\mathbbm{m}}^{(t)} <
\left\lfloor\frac{n}{j}\right\rfloor - \sqrt{j n \log n}\right)$.

We next consider $\Prob{\mathbf{c}}{\bigwedge_{s = 
1}^{t-1}C_{\mathbbm{m}}^{(t)} >
\left\lfloor\frac{n}{j}\right\rfloor - \sqrt{j n \log n}}$. We can 
write:

\vspace{-2mm}
\begin{small}
\begin{eqnarray*}
\Prob{\mathbf{c}}{\bigwedge_{s = 1}^{t-1}C_{\mathbbm{m}}^{(t)} >
\left\lfloor\frac{n}{j}\right\rfloor - \sqrt{j n \log n}} 
& = & \prod_{s = 1}^{t-1}\Prob{\mathbf{c}}{C_{\mathbbm{m}}^{(s)} >
\left\lfloor\frac{n}{j}\right\rfloor - \sqrt{j n \log 
n}\Bigg\vert\bigwedge_{r = 1}^{s-1}C_{\mathbbm{m}}^{(r)} >
\left\lfloor\frac{n}{j}\right\rfloor - \sqrt{j n \log n}} \\
& = & \prod_{s = 1}^{t-1}\Prob{\mathbf{c}}{C_{\mathbbm{m}}^{(s)} >
\left\lfloor\frac{n}{j}\right\rfloor - \sqrt{j n \log 
n}\Bigg\vert C_{\mathbbm{m}}^{(s-1)} >
\left\lfloor\frac{n}{j}\right\rfloor - \sqrt{j n \log n}}\label{eq:10}
\end{eqnarray*}
\end{small}

\vspace{-2mm}\noindent
where the last equality follows since the $3$-majority process is 
Markovian. We next give an upper bound on 
$\Prob{\mathbf{c}}{C_{\mathbbm{m}}^{(s)} >
\left\lfloor\frac{n}{j}\right\rfloor - \sqrt{j n \log 
n}\Bigg\vert C_{\mathbbm{m}}^{(s-1)} >
\left\lfloor\frac{n}{j}\right\rfloor - \sqrt{j n \log n}}$:
\begin{small}
\begin{eqnarray*}
&&\Prob{\mathbf{c}}{C_{\mathbbm{m}}^{(s)} >
\left\lfloor\frac{n}{j}\right\rfloor - \sqrt{j n \log 
n}\Bigg\vert C_{\mathbbm{m}}^{(s-1)} >
\left\lfloor\frac{n}{j}\right\rfloor - \sqrt{j n \log n}} = \\
&& = \sum_{\mathbf{\hat{c}}: \hat{c}_{\mathbbm{m}} >
\left\lfloor\frac{n}{j}\right\rfloor - \sqrt{j n \log 
n}}\Prob{\mathbf{\hat{c}}}{C_{\mathbbm{m}}^{(1)} >
\left\lfloor\frac{n}{j}\right\rfloor - \sqrt{j n \log 
n}}\cdot \Prob{\mathbf{c}}{\mathbf{C}^{(s-1)} = 
\mathbf{\hat{c}}\Bigg\vert C_{\mathbbm{m}}^{(s-1)} >
\left\lfloor\frac{n}{j}\right\rfloor - \sqrt{j n \log n}}\le\\
&&\le \sum_{\mathbf{\hat{c}}: \hat{c}_{\mathbbm{m}} >
\left\lfloor\frac{n}{j}\right\rfloor - \sqrt{j n \log 
n}}\Prob{\mathbf{\hat{c}}}{C_{l}^{(1)} >
\left\lfloor\frac{n}{j}\right\rfloor - \sqrt{j n \log 
n}}\cdot \Prob{\mathbf{c}}{\mathbf{C}^{(s-1)} = 
\mathbf{\hat{c}}\Bigg\vert C_{\mathbbm{m}}^{(s-1)} >
\left\lfloor\frac{n}{j}\right\rfloor - \sqrt{j n \log n}},
\end{eqnarray*}
\end{small}
where $l = \arg\hat{c}_{\mathbbm{m}}$ (ties broken arbitrarily).
We can give an upper bound on  
$\Prob{\mathbf{\hat{c}}}{C_{l}^{(1)} >
\left\lfloor\frac{n}{j}\right\rfloor - \sqrt{j n \log 
n}}$ using a ``reverse'' Chernoff bound\footnote{A folklore example 
with complete proofs can be found at 
\url{http://cstheory.stackexchange.com/questions/14471/reverse-chernoff-bound}.}. 
In particular, it is possible to show that 
\begin{small}
\begin{eqnarray*}
&&\Prob{\mathbf{\hat{c}}}{C_{l}^{(1)} >
(1 - \delta)\Expec{\mathbf{\hat{c}}}{C_{l}^{(1)}}}\le 1 - e^{-\beta\delta^2\Expec{\mathbf{\hat{c}}}{C_{l}^{(1)}}}
\end{eqnarray*}
\end{small}
for a suitable constant $\beta$. We use $\delta = \sqrt{j n \log 
n}/\Expec{\mathbf{\hat{c}}}{C_{l}^{(1)}}$ and note that 
$n/2j\le\Expec{\mathbf{\hat{c}}}{C_{l}^{(1)}}\le n/j$, so that 
\begin{eqnarray}
&&\Prob{\mathbf{\hat{c}}}{C_{l}^{(1)} >
(1 - \delta)\Expec{\mathbf{\hat{c}}}{C_{l}^{(1)}}}\le 1 - 
e^{-4\beta j^2\log n}.\label{eq:11}
\end{eqnarray}
Saturating with respect to $\mathbf{\hat{c}}$ yields 
\[
\Prob{\mathbf{c}}{C_{\mathbbm{m}}^{(s)} >
\left\lfloor\frac{n}{j}\right\rfloor - \sqrt{j n \log 
n}\Bigg\vert C_{\mathbbm{m}}^{(s-1)} >
\left\lfloor\frac{n}{j}\right\rfloor - \sqrt{j n \log n}}\le 1 - 
e^{-4\beta j^2\log n}.
\]
On the other hand, using standard concentration techniques and 
recalling that $\Expec{\mathbf{\hat{c}}}{C_{l}^{(1)}}\le n/2j$, we can 
prove that:
\begin{eqnarray}
\Prob{\mathbf{c}}{C_{\ell}^{(t} <
\left\lfloor\frac{n}{j}\right\rfloor - \alpha\sqrt{j n \log 
n}\Bigg\vert
\bigwedge_{s = 1}^{t-1}C_{\mathbbm{m}}^{(t)} >
\left\lfloor\frac{n}{j}\right\rfloor - \sqrt{j n \log n}}\le 
e^{-\frac{\alpha^2}{6} j^2\log n}\label{eq:12}
\end{eqnarray}
Next, substituting \eqref{eq:11} and \eqref{eq:12} into \eqref{eq:9}, 
the result follows after simple calculations and by suitably choosing 
$\alpha$.
\end{proof}

\section{Convergence Time of 3-Majority with Adversary}
\label{sec:adv}

In this section we give the formal definitions of the notions introduced  in 
Section \ref{ssec::dynadvs} and we provide the  proof of  the lemmas stated 
 in the proof outline of Theorem \ref{theorem:final-adv}.

The actions of the dynamic adversary over the studied process can be
formalized as follows.

\begin{definition}\label{def:dynadv}
At the end of \emph{every round} $t$, after nodes have updated their opinions
(i.e. once the configuration $ \mathbf{C}^{(t)} = \mathbf{c}^{(t)}$ is
realized), the $F$-dynamic adversary looks at  the current opinion
configuration and replaces the opinion of up to $F$ nodes with any opinion in
$\Sigma$. We define    $\tilde { \mathbf{C} }^{(t)} $ as the configuration that
results from the adversary's action on $\mathbf{c}^{(t)}$  and
$\advi[\left(t\right)] = \advi[\left(t\right)]( \mathbf{c}^{(0)}, \tilde {
\mathbf{c} }^{(0)} , \dots , \mathbf{c}^{(t-1)}, \tilde { \mathbf{c} }^{(t
- 1)}, \mathbf{c}^{(t)} )$ as the r.v. corresponding to the number of
nodes that the adversary adds or removes from $c_{i}$ (note that $ \sum_{
i \in \Sigma } \left|\advi\right|\leq 2 F$) at the end of the $t$-th round,
based on all the past history of the process, i.e.
\[
    \tilde { \mathbf{C} }^{(t)} = \left( {C}_{1}^{(t)} +
    \advletter_{1}^{(t)}, \dots , { C }_{|\Sigma|}^{(t)} +
    \advletter_{|\Sigma|}^{(t)} \right)
\]
\end{definition}

We consider an $F$-dynamic adversary with $F \leqslant 
\adv$ for a suitable positive constant $\beta$.  
As we will show in the proof of Lemma \ref{le:breaking-adv}, this  bound on $F$ turns
out to be almost tight for guaranteeing that the process converges to an 
almost-consensus regime in polynomial time, w.h.p.

The presence of the adversary requires us to distinguish between valid and non
valid opinions. So, we recall that the  set of valid opinions $\goodcols
\subseteq \Sigma$ is the subset of active opinions in the initial configuration
and  we observe that, in the reminder of this section, $k$ denotes  the number
of valid opinions, i.e.,  $k : = |\goodcols|$. As discussed in the overview of
the process analysis (Section \ref{sec:prely}), the adversary can introduce  ``small''    non-valid
opinions  and  it can keep active small valid opinions that would otherwise
disappear (as shown in Section \ref{se:noadv}). These facts lead us to  the
problem of  managing ``small'' opinions, whose formal definition can be given
as follows.
 
\begin{definition}
Let 
$ \bad:=\{ i\ \big|\ c_{i}\leq\smallsize\} $ 
be the set of the \emph{small opinions}, where 
 $\smallsizeconst$ is some constant such that $\smallsizeconst > \advconst$, 
and let its complement $B:=\bar{\bad} = \{ i\ \big|\ c_{i}>\smallsize\} $ be the set of the  \emph{big opinions}.
\end{definition}

We now re-state Theorem \ref{theorem:final-adv}, 
whose idea of proof is outlined in Section \ref{ssec::dynadvs}. 

\begin{reptheorem}{theorem:final-adv}
  Let $k \leq n^{\alpha}$  for some constant  $\alpha <1 $  and $F = \adv$ for some constant   $\beta>0$.    Starting from any initial configuration having $k$ opinions, the  3-majority
     reaches a (valid) stabilizing almost-consensus in presence of   any
    $F$-dynamic adversary within 
    $O( (k^{2} \sqrt{\log n} + k \log n ) (k+\log n) )$
    rounds, w.h.p.
\end{reptheorem}

\begin{proof}[Sketch of proof]
    From Lemmas \ref{le:breaking-adv} and \ref{le:dying-adv} it follows that
    from any configuration with $j \leqslant k$ active opinions, within
    $\mathcal{O}(k^2 \sqrt{\log n} + k \log n)$ rounds at least one of the
    opinions becomes small with probability at least $1/2$. Thus within
    $\mathcal{O}((k^2 \sqrt{\log n} + k \log n) (k + \log n))$ rounds all
    opinions but one become small w.h.p.
\end{proof}

In order to prove Lemma \ref{le:breaking-adv} and Lemma \ref{le:dying-adv},
we need the following three lemmas.

\begin{lemma}
	\label{lem:average-adv} 
	Let $ \tilde{ \mathbf{c} }$ be any configuration such that
	$|\bigc|\leq j$ and $\sum_{i \in \badcols}	\tilde{c}_i^{(t)} \leq \smallsize$.
    For some constant $\smallbiasconst>0$, 
    for any opinion $i$ such that $ \tilde{ c }_{i} \geq \smallsize$,  it holds
	\begin{eqnarray}
		\expec{ 	{C}_{i}^{(t+1)} }
        { \tilde{ \mathbf{C} }^{(t)}=  \tilde{ \mathbf{c} } } 
		&\leq&
        \tilde{c}_{i}\left( 1 -  
         \frac 1j + \frac{ \tilde{c}_{i} + \smallbias}{n} 
        \right)
        \label{eq:average-before-adv} \\
		\expec{ 	\tilde{C}_{i}^{(t+1)} }
        { \tilde{ \mathbf{C} }^{(t)}=  \tilde{ \mathbf{c} } } 
		&\leq&
        \tilde{c}_{i}\left( 1 - \min \left\{ 
         \frac 1j - \frac{ \tilde{c}_{i} + \smallbias}{n} \ ,\ 
        \frac 12 \left( \frac 1j - \frac{ \tilde{c}_{i}}{n} \right) \right\}
        \right)
        \label{eq:average-adv}
    \end{eqnarray}
\end{lemma}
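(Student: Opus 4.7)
My plan is to derive both inequalities starting from the identity in Lemma~\ref{lem:average} applied to the configuration $\tilde{\mathbf{c}}$. The key step is to lower-bound $\sum_\ell \tilde{c}_\ell^2$: by the hypotheses $|\bigc| \leq j$ and $\sum_{\ell \in \badcols} \tilde{c}_\ell \leq \smallsize$, together with the fact that each of the at most $k$ valid small opinions carries at most $\smallsize$ nodes, the total mass outside $\bigc$ is at most $(k+1)\smallsize$. Cauchy--Schwarz restricted to $\bigc$ then gives
\[
  \sum_\ell \tilde{c}_\ell^2 \;\geq\; \sum_{\ell \in \bigc} \tilde{c}_\ell^2 \;\geq\; \frac{\bigl(\sum_{\ell \in \bigc}\tilde{c}_\ell\bigr)^2}{|\bigc|} \;\geq\; \frac{(n - (k+1)\smallsize)^2}{j}.
\]
Dividing by $n^2$ yields $\sum_\ell \tilde{c}_\ell^2/n^2 \geq 1/j - O(k\smallsize/(jn))$, and plugging this back into Lemma~\ref{lem:average} gives
\[
  \expec{C_i^{(t+1)}}{\tilde{\mathbf{C}}^{(t)} = \tilde{\mathbf{c}}} \;\leq\; \tilde{c}_i\Bigl(1 - \frac{1}{j} + \frac{\tilde{c}_i}{n} + O\bigl(\tfrac{k\smallsize}{jn}\bigr)\Bigr);
\]
choosing $\smallbiasconst$ large enough so that $O(k\smallsize/j) \leq \smallbias$ uniformly for $j \leq k$ rewrites this as~\eqref{eq:average-before-adv}.

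For~\eqref{eq:average-adv}, I would pass from $C_i^{(t+1)}$ to $\tilde{C}_i^{(t+1)}$ via the deterministic inequality $\tilde{C}_i^{(t+1)} \leq C_i^{(t+1)} + F$, since the $F$-dynamic adversary can add at most $F$ supporters to opinion $i$ after the 3-majority step, so that
\[
  \expec{\tilde{C}_i^{(t+1)}}{\tilde{\mathbf{C}}^{(t)} = \tilde{\mathbf{c}}} \;\leq\; \expec{C_i^{(t+1)}}{\tilde{\mathbf{C}}^{(t)} = \tilde{\mathbf{c}}} + F,
\]
and a case split on whether $\tilde{c}_i$ is close to or far from $n/j$ produces the $\min$ form. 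When $\tilde{c}_i \geq n/j - 2\smallbias$ the first argument of the $\min$, namely $1/j - (\tilde{c}_i + \smallbias)/n$, is the smaller one; here $\tilde{c}_i = \Omega(n/k)$, so $\tilde{c}_i \smallbias/n = \Omega(\smallbias/k)$ is larger than $F$ by a factor of $\Theta(\log n)$, and the slack already built into~\eqref{eq:average-before-adv} absorbs $F$. When $\tilde{c}_i \leq n/j - 2\smallbias$, the second argument $(1/j - \tilde{c}_i/n)/2$ is the active one, and the factor~$1/2$ is precisely what reserves half of the pure drift $(1/j - \tilde{c}_i/n)$ to cover both the small-opinion correction $O(k\smallsize/(jn))$ and the adversarial term $F/\tilde{c}_i$.

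The hard part is the constant bookkeeping. After substituting $\smallsize = \smallsizeconst\sqrt{n}/k^{3/2}$, $\smallbias = \smallbiasconst\sqrt{n/k}$, and $F = \advconst\sqrt{n}/(k^{5/2}\log n)$, both absorptions reduce to inequalities among $\smallbiasconst$, $\smallsizeconst$, and $\advconst$ that are satisfied provided $\smallbiasconst$ is chosen larger than suitable multiples of $\smallsizeconst$ and $\advconst/\log n$. The extra $\log n$ factor in the denominator of $F$ is exactly what makes the dynamic-adversary budget compatible with the $\min$-based drift bound, and indeed this is what forces the near-tight choice $F = \Theta(\adv)$ in the main theorem.
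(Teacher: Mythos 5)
Your proposal follows the same skeleton as the paper's proof: apply Lemma~\ref{lem:average} at $\tilde{\mathbf{c}}$, lower-bound $\sum_\ell \tilde{c}_\ell^2$ by restricting to big opinions and using Cauchy--Schwarz (the paper phrases it as a convexity/equalization step, writing each big support as the average big mass over $|\bigc|$ slots, but it is the identical inequality), absorb the resulting correction into the $\smallbias$ term to get~\eqref{eq:average-before-adv}, and then bound $\expec{\tilde{C}_i^{(t+1)}}{\cdot} \leq \expec{C_i^{(t+1)}}{\cdot} + F$ followed by a case split on $\tilde{c}_i$. Your case split at $\tilde{c}_i \gtrless n/j - 2\smallbias$ (where the $\min$ switches) differs from the paper's threshold $\tilde{c}_i \gtrless n/(3j)$, but both close once the constants $\smallbiasconst,\smallsizeconst,\advconst$ are tuned; two very minor slips worth noting are that the absorption margin in your first case is actually $\Theta(k\log n)$ rather than $\Theta(\log n)$, and that it is the Chernoff-based lemmas elsewhere, not this drift estimate, that really force the $1/\log n$ factor in $F$.
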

\begin{proof}
	Similarly to the proof of Lemma \ref{lem:average} we have 
	\begin{eqnarray*}
		&\expec{ C_{i}^{(t+1)} }
        { \tilde{ \mathbf{C} }^{(t)} =  \tilde{ \mathbf{c} } } 
		&\leq
         \tilde{ c }_{i}\left(1+\frac{ \tilde{ c }_{i}}{n}-\frac{\sum_{j}
         \tilde{ c }_j^2}{n^2}\right)
		\leq
         \tilde{ c }_{i}\left(1+\frac{\tilde{c}_{i}}{n}-\frac{ \sum_{j\in
         \bigc}\tilde{c}_j^2}{n^2}\right)\\
		&&\leq 
		 \tilde{ c }_{i}\left(1+\frac{\tilde{c}_{i}}{n}-\frac{ \sum_{j\in
		\bigc}\left(\frac{n -(k-j+1)\smallsize}{j}\right)^2}{n^2}\right)\\
		&&\leq
		\tilde{c}_{i}\left(1+\frac{\tilde{c}_{i}}{n}-\frac{ \sum_{j\in
        \bigc} ( n - \smallbiasconst / 4 \sqrt{ n / k } )^2}{ j^2 n^2 }\right)\\
		&&\leq
        \tilde{c}_{i}\left( 1+\frac{\tilde{c}_{i}}{n}-\frac 1j + \frac{
            \smallbiasconst / 2 \sqrt{ n / k } }{jn}\right)
		\leq          
        \tilde{c}_{i}\left( 1 - \frac{n / j - \tilde{c}_{i} - 
        \smallbiasconst / 2 \sqrt{ n / k }}{n}\right)
	\end{eqnarray*}
    Taking into account {any} possible action of the adversary, 
        we thus get that
        \begin{eqnarray}
            \expec{  \tilde{ C }_{i}^{(t+1)} }
            { \tilde{ \mathbf{C} }^{(t)} =  \tilde{ \mathbf{c} } } 
             &=&
            \expec{ C_{i}^{(t+1)} }
            { \tilde{ \mathbf{C} }^{(t)} =  \tilde{ \mathbf{c} } } 
            +
            \expec{ \advletter_{i}^{(t+1)} }
            { \tilde{ \mathbf{C} }^{(t)} =  \tilde{ \mathbf{c} } } \nonumber\\
            &\leq& 
            \tilde{c}_{i}\left( 1 - \frac{n / j - \tilde{c}_{i} - 
            \smallbiasconst / 2 \sqrt{ n / k }}{n}\right) + F \nonumber\\
            &\leq& 
            \tilde{c}_{i}\left( 1 - \frac{n / j - \tilde{c}_{i}  }{ n }
            + \frac { 2\max \left\{ \smallbiasconst / 2 \sqrt{ n / k }, 
                F n / \tilde{c}_{i} \right\}}{n}\right). 
            \label{eq:last-average-adv}
        \end{eqnarray}
    By distinguishing the cases $\tilde{c}_{i} \geq n / (3j)$ or 
    $\tilde{c}_{i} < n / (3j)$, from \eqref{eq:last-average-adv} we get
    \eqref{eq:average-adv}.
\end{proof}

In Lemma \ref{le:hyp-H} we prove the following   
key-properties  of the process in the presence of the dynamic adversary: 
w.h.p., it is \emph{never} the case that 
\begin{enumerate}
    \item if in a given round a valid opinion is small then it gets big 
        at a later time, i.e. $\bad^{(t-1)} \subseteq \bad^{(t)}$;
    \item  the size of the overall set of non valid opinions grows beyond 
        $\smallsize$, i.e. $\sum_{i \in \badcols}c_i \leq \smallsize$.
\end{enumerate}

\begin{lemma}
	\label{le:hyp-H}
	If $ \tilde{ \mathbf{c} }^{(t)}$ is such that 
    $\sum_{i \in \badcols} \tilde{ c }_i^{(t)} \leq \smallsize$, then 
    $\sum_{i \in \badcols} \tilde{ C }_i^{(t+1)} \leq \smallsize$ and 
    $\bad^{(t)} \subseteq \bad^{(t+1)}$, w.h.p.
\end{lemma}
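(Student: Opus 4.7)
My plan is to establish both conclusions by first proving a single uniform negative-drift bound for every small opinion and then applying binomial concentration plus a union bound. The key input is a lower bound on the sum of squares $R:=\sum_{\ell}(\tilde c_{\ell}^{(t)})^{2}/n^{2}$ appearing in Lemma~\ref{lem:average}. By hypothesis $\sum_{i\in\badcols}\tilde{c}_{i}^{(t)}\leq\smallsize$, and since each of the at most $k$ small \emph{valid} opinions has mass $\leq\smallsize$, the big opinions carry total mass at least $n-(k+1)\smallsize=n(1-o(1))$. Since $|B|\leq k$, Cauchy--Schwarz gives
\[
R \;\ge\; \frac{1}{|B|\,n^{2}}\!\left(\sum_{\ell\in B}\tilde{c}_{\ell}^{(t)}\right)^{\!2} \;\ge\; \frac{1}{2k}
\]
for $n$ large enough. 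Substituting this into Lemma~\ref{lem:average}, together with $\tilde{c}_{i}^{(t)}/n\leq \smallsize/n\ll 1/(4k)$, yields for every small opinion $i$ the uniform drift bound $\Expec{}{C_{i}^{(t+1)}\mid\tilde{\mathbf{C}}^{(t)}=\tilde{\mathbf{c}}^{(t)}}\leq\tilde{c}_{i}^{(t)}(1-1/(4k))$.

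For the inclusion $\bad^{(t)}\subseteq\bad^{(t+1)}$, I would fix $i\in\bad^{(t)}$ and observe that, conditional on $\tilde{\mathbf{C}}^{(t)}$, the variable $C_{i}^{(t+1)}$ is binomial (each node independently draws its three samples). A Chernoff bound then gives $C_{i}^{(t+1)}\leq\smallsize(1-1/(4k))+O(\sqrt{\smallsize\log n})$ with probability $1-n^{-\Omega(1)}$. Adding the at most $F$ nodes the adversary can redirect to opinion $i$,
\[
\tilde{C}_{i}^{(t+1)}\;\leq\;\smallsize\!\left(1-\frac{1}{4k}\right) + O\!\left(\sqrt{\smallsize\log n}\right) + F,
\]
and the drift $\smallsize/(4k)=\smallsizeconst\sqrt{n}/(4k^{5/2})$ dominates both $F=\advconst\sqrt{n}/(k^{5/2}\log n)$ (using $\smallsizeconst>\advconst$ and $\log n$ large enough) and $O(n^{1/4}\sqrt{\log n}/k^{3/4})$ (using $k\leq n^{\alpha}$ for $\alpha$ small enough), so $\tilde{C}_{i}^{(t+1)}\leq\smallsize$. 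A union bound over the $k+\smallsize=O(\sqrt{n})$ small opinions extends this to all $i\in\bad^{(t)}$.

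For the aggregate claim $\sum_{i\in\badcols}\tilde{C}_{i}^{(t+1)}\leq\smallsize$ I switch to a per-node view: let $X_{u}$ be the indicator that node $u$ adopts some non-valid opinion at round $t+1$. Different nodes draw their three samples independently, so $\{X_{u}\}_{u}$ are mutually independent. Summing the per-opinion transition probabilities of Lemma~\ref{lem:average} over $i\in\badcols$ and using $R\geq 1/(2k)$ gives $\Expec{}{X_{u}}\leq q\,(1-R+\smallsize/n)$ with $q:=\sum_{i\in\badcols}\tilde{c}_{i}^{(t)}/n\leq\smallsize/n$, hence $\Expec{}{\sum_{u}X_{u}}\leq\smallsize(1-1/(4k))$. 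Chernoff on the independent $\{X_{u}\}$ then yields $\sum_{i\in\badcols}C_{i}^{(t+1)}\leq\smallsize(1-1/(4k))+O(\sqrt{\smallsize\log n})$ w.h.p., and the same parameter comparison as above shows that adding the adversary's $F$ keeps the total below $\smallsize$.

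The main obstacle is the parameter tuning: the per-round negative drift of $\Theta(\smallsize/k)$ must \emph{strictly} dominate both the adversary's budget $F$ and the Chernoff slack $O(\sqrt{\smallsize\log n})$. The first domination is precisely what forces the strict inequality $\smallsizeconst>\advconst$ in the definition of ``small'' and the $\log n$ factor in the assumption $F\leq\advconst\sqrt{n}/(k^{5/2}\log n)$; the second forces the upper bound $k\leq n^{\alpha}$ for a suitably small constant $\alpha$. Once both inequalities hold with slack, the $n^{-\Omega(1)}$ failure probabilities from the polynomially many Chernoff invocations survive the union bound and yield the w.h.p.\ conclusion.
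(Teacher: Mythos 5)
Your proposal is correct and follows essentially the same route as the paper's proof: establish a uniform $\Theta(1/k)$ multiplicative negative drift for small opinions (the paper cites Lemma~\ref{lem:average-adv}, you re-derive the relevant special case inline by lower-bounding $\sum_\ell \tilde c_\ell^2/n^2 \geq 1/(2k)$ via Cauchy--Schwarz over the at most $k$ big opinions), then apply Chernoff plus a union bound and absorb the adversary's budget $F$ and the $O(\sqrt{\smallsize\log n})$ slack into the drift. The only cosmetic differences are that you make the per-node independence underlying the aggregate Chernoff bound explicit and spell out the parameter comparisons the paper leaves implicit.
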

\begin{proof}
	From Lemma \ref{lem:average-adv}, for each $i\in\bad^{(t)}$ we have that 
	\begin{eqnarray*} 
		\expec{  { C }_{i}^{(t+1)} }{ \tilde{ \mathbf{C} }^{(t)}
        =  \tilde{ \mathbf{c} } } 
        \leq  \tilde{ c }_{i}\left( 1+\frac{ \tilde{ c }_{i}}{n} 
        - \frac 1k \right) .
	\end{eqnarray*}
	From a direct application of the Chernoff bound to $C_{i}^{(t+1)} $, 
	and taking into account {any} possible action of the adversary, 
    we thus get that w.h.p. 
    \[
        \tilde{ C }_{i}^{(t+1)} =  C_{i}^{(t+1)} 
        + \advletter_i^{(t+1)}\leq\fsmallsize
         \left(1 - \frac 1{ 4k }\right) + F
		\leq \fsmallsize,
    \]
	that is, $i\in \bad^{(t)}$ w.h.p.
	Analogously, we have 
	\begin{eqnarray*} 
		\expec{ \sum_{i \in \badcols^{(t)}} { C }_i^{(t+1)} }
        { \tilde{ \mathbf{C} }^{(t)}=  \tilde{ \mathbf{c} } } 
		\leq \sum_{i \in \badcols} \tilde{ c }_i^{(t)}
        \left( 1+\frac{ \tilde{ c }_{i}}{n} -\frac 1k \right)
		\leq \fsmallsize\left( 1 - \frac 1{2k} \right),
	\end{eqnarray*}
	and then, by applying the Chernoff bound, we get that w.h.p. 
	\begin{eqnarray*}
        \sum_{i \in \badcols^{(t)}}   \tilde{ C }_i^{(t+1)} =
		\sum_{i \in \badcols^{(t)}}  C_i^{(t+1)} 
		+ \sum_i D_{i}^{(t+1)}  \leq \fsmallsize \left( 1-\frac1{4k} \right) + F 
		\leq \fsmallsize,
	\end{eqnarray*}
	concluding the proof.
\end{proof}

\begin{lemma}\label{le:stophp-adv}
Let $ \tilde{ \mathbf{c} }$ be any configuration such that
$|\bigc| = j$ and $\sum_{i \in \badcols} \tilde{ c }_i \leq \smallsize$.
Consider the stochastic process $\{ \tilde{ Y }_t\}_t$ defined as 
$ \tilde{ Y }_t = \left\lfloor\frac{n}{j}\right\rfloor 
    -  \tilde{ C }_{\mathbbm{m}}^{(t)}$ 
and define the stopping time 
\[  
        \tau = \inf \{ t \in \mathbb{N} 
        \,:\,  \tilde{ Y }_t \geqslant \sqrt{j n \log n} \,\vee\, 
        \big( \sum_{i \in \badcols} \tilde{ C }_i \geq \smallsize \big) \,\vee\,
        ( \bad^{(t-1)} \not \subseteq \bad^{(t)} )
        \}.
\]
Then, it holds that 
\[
    \Prob{\mathbf{c}}{ \tilde{ Y }_\tau > \alpha\sqrt{jn\log n}}\le\frac{1}{n}.
\]
\end{lemma}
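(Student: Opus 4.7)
The plan is to follow the structure of the non-adversarial analogue (Lemma \ref{le:stophp}), adapted in two ways: every binomial (anti-)concentration estimate must now absorb the adversary's per-round perturbation $|\advletter_\ell^{(t)}|\leq F$, and the stopping time now also triggers on two ``bad'' events, namely $\sum_{i\in\badcols}\tilde{C}_i\geq\smallsize$ and $\bad^{(t-1)}\not\subseteq\bad^{(t)}$. These bad events are controlled by Lemma \ref{le:hyp-H} and their total probability over $\poly(n)$ rounds can be absorbed into the final $1/n$.

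I would start with a union bound: it suffices to show that for each opinion $\ell\in\bigc^{(0)}$,
\[
\Prob{\mathbf{c}}{\tilde{C}_\ell^{(\tau)} < \lfloor n/j\rfloor - \alpha\sqrt{jn\log n}} \leq \frac{1}{jn}.
\]
The restriction to $\ell\in\bigc^{(0)}$ is legitimate: by the definition of $\tau$, $\bad^{(s-1)}\subseteq\bad^{(s)}$ holds for every $s\leq\tau-1$, so the set of big opinions can only shrink up to time $\tau-1$; the same monotonicity at the last step $\tau$ holds except on a $1/n^{\Omega(1)}$-probability event handled by Lemma \ref{le:hyp-H}. Hence $\bigc^{(\tau)}\subseteq\bigc^{(0)}$ w.h.p., and any opinion whose support is below $\lfloor n/j\rfloor - \alpha\sqrt{jn\log n}$ is still far above $\smallsize$ for $j\leq k\leq n^{\alpha}$, so must belong to $\bigc^{(\tau)}\subseteq\bigc^{(0)}$.

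Fixing such an $\ell$, I would decompose over $\tau=t$ and condition on the history $\tilde{\mathbf{c}}^{(t-1)}$, mirroring \eqref{eq:9}. Since $\tau>t-1$, we have $\tilde{c}_\mathbbm{m}^{(t-1)}>\lfloor n/j\rfloor-\sqrt{jn\log n}$ and $\sum_{i\in\badcols}\tilde{c}_i^{(t-1)}\leq\smallsize$, so in particular $\tilde{c}_\ell^{(t-1)}\geq\lfloor n/j\rfloor-\sqrt{jn\log n}$; Lemma \ref{lem:average-adv} then bounds the conditional expectation of $C_\ell^{(t)}$ by $\tilde{c}_\ell^{(t-1)}$ up to an additive drift of order $O(\smallbias)=O(\sqrt{n/k})$, which is negligible on the scale $\sqrt{jn\log n}$. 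A standard Chernoff tail bound for the binomial $C_\ell^{(t)}$ then gives a single-step downward-deviation probability of order $e^{-\Theta(\alpha^2 j^2\log n)}$, and the same estimate (with a slightly worse constant) survives after adding $\advletter_\ell^{(t)}$, since $F=\advconst\sqrt{n}/(k^{5/2}\log n)=o(\sqrt{jn\log n})$. For the companion anti-concentration factor I would reapply the reverse-Chernoff estimate of \eqref{eq:11} to each big $C_\ell$, giving $\Prob{}{\tilde{C}_\mathbbm{m}^{(s)}>\lfloor n/j\rfloor-\sqrt{jn\log n}\mid \tilde{\mathbf{C}}^{(s-1)}}\leq 1-e^{-\Theta(j^2\log n)}$, once again because the $\pm F$ adversary shift is dominated by the binomial standard deviation $\Theta(\sqrt{n/j})$. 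Summing the resulting geometric series in $t$ and taking $\alpha$ large enough in terms of the hidden constants then yields the $1/n$ bound.

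The hard part will be the book-keeping: each concentration inequality must hold \emph{simultaneously} with the event that none of the three stopping conditions has triggered before time $t$, and one must verify that the quantitative choice $F=\adv$ is genuinely small relative to $\sqrt{jn\log n}$ uniformly in $j\leq k$, so that the drift and variance of $C_\ell^{(t)}$ given by Lemma \ref{lem:average-adv} always dominate the adversary's contribution. Once this is lined up, the argument proceeds in perfect parallel with that of Lemma \ref{le:stophp}.
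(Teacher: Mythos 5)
Your proposal is correct and follows essentially the same route as the paper, which gives only a brief sketch pointing to equation~\eqref{eq:9} of Lemma~\ref{le:stophp} and flagging the two new disjuncts in $\mathcal{E}^{(t)}$. You fill in the same ingredients: the decomposition over $\tau=t$ conditioned on $\bigwedge_{s<t}\neg\mathcal{E}^{(s)}$, the observation that $C_\ell^{(t)}<\lfloor n/j\rfloor-\alpha\sqrt{jn\log n}$ still forces $\mathcal{E}^{(t)}$, absorption of the $\pm F$ shift into the Chernoff and reverse-Chernoff steps, and control of the two new bad stopping conditions via Lemma~\ref{le:hyp-H}. Your explicit justification that the union bound may be restricted to $\ell\in\bigc^{(0)}$ (since big opinions only shrink up to time $\tau-1$ by construction, and at step $\tau$ w.h.p.\ by Lemma~\ref{le:hyp-H}) is a detail the paper's sketch leaves implicit, but it is exactly what the paper's ``non-hard adaptations'' must amount to.
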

\begin{proof}[Sketch of proof.]
    The proof of this Lemma follows from minor modifications of the proof 
    of Lemma \ref{le:stophp}. In particular, the argument is based on the 
    following observations: 

    \noindent {\bf 1.} The event defining the stopping time $\tau$ is in 
    this case 
    \[
        {\mathcal E}^{(t)} = \big(\tilde{ Y }_t \geq ( \sqrt{j n \log n} ) \,\vee\, 
            \big( \sum_{i \in \badcols} \tilde{ C }_i \geq \smallsize \big) \,\vee\,
            ( \bad^{(t-1)} \not \subseteq \bad^{(t)} )\big).
    \]
    The negated of this event is 
    \[
        \neg{\mathcal E}^{(t)} = \big(\tilde{ Y }_t \le ( \sqrt{j n 
        \log n} ) \,\wedge\, 
        \big( \sum_{i \in \badcols} \tilde{ C }_i \le \smallsize \big) 
        \,\wedge\,
        ( \bad^{(t-1)}  \subseteq \bad^{(t)} )\big),
    \]
    which implies the event $\left(\tilde{ Y }_t \le \sqrt{j n \log n} \right)$.
            
    \noindent {\bf 2.} Proceeding like in the proof of Lemma 
    \ref{le:stophp}, we can write an expression that is similar to 
    \eqref{eq:9}, with the generic conditioning event 
    \[
        C_{\mathbbm{m}}^{(s)} > \left\lfloor\frac{n}{j}\right\rfloor 
        - \sqrt{j n \log n}, 
    \]
    replaced by ${\neg\mathcal E}^{(s)}$. The conditioned event 
    \[
        C_{\ell}^{(t)} < \left\lfloor\frac{n}{j}\right\rfloor 
        - \alpha\sqrt{j n \log n},
    \]
    is instead replaced by the event 
    \[
        \left( C_{\ell}^{(t)} < \left\lfloor\frac{n}{j}\right\rfloor 
        - \alpha\sqrt{j n \log n}\right )\wedge {\mathcal E}^{(t)}. 
    \]
    Now, note that the event 
    \[
        C_{\ell}^{(t)} < \left\lfloor\frac{n}{j}\right\rfloor 
        - \alpha\sqrt{j n \log n},
    \] 
    again implies ${\mathcal E}^{(t)}$. Hence,  we can still write \eqref{eq:9},
    from which the proof requires some non-hard adaptations w.r.t. the case 
    without adversary.
\end{proof}

Since the adversary, at time $t$, may decide what to do based on the full 
history of the process up to time $t$, 
the stochastic process $\left\{ \mathbf{ \tilde {C} }^{(t)} \right\}_{t}$ 
may not be a Markov process anymore. Thus, we need a more general 
version of Lemma \ref{lemma:expectimewithdrift}. 

\begin{lemma}\label{lemma:expectimewithdrift-adv}
Let $\{X_t\}_t$ be a discrete time stochastic process 
with a finite state space $\Omega$, let $f_t :
\Omega ^t \rightarrow N$ be a function mapping histories of the process 
in non-negative integer numbers, and let $\{Y_t\}_t$ be the stochastic process 
over $N$ defined by $Y_t = f_t ( X_0, \dots, X_t )$. 
Let $m \in N$ be a ``target value'', 
let $A \subseteq \Omega$ be an arbirary subset of states, and let 
$$
\tau = \inf \{ t \in \mathbb{N} \,:\, Y_t \geqslant m \mbox{ or } X_t \notin A\}
$$
be the random variable indicating the first time $X_t$ exits from set $A$ or
$Y_t$ reaches or exceeds value $m$. 
Assume that, for every sequence of states $x_0, \dots, x_t \in A$ with
$f_t( x_0, \dots, x_t ) \leqslant m-1$, it holds that
\begin{enumerate}
\item (Positive drift). $\Expec{}{Y_{t+1} \,|\, X_0 = x_0, \dots, X_t = x_t } 
    \geq f_t(x_0, \dots, x_t) + \lambda$ for some $\lambda > 0$ 
\item (Bounded jumps).  $\Prob{x}{Y_{\tau} \geqslant \alpha m} \leqslant \alpha
    m /n$, for some $\alpha > 1$.
\end{enumerate}
Then, for every starting state $x \in A$, it holds that
$$
\Expec{x}{\tau} \leqslant 2 \alpha \frac{m}{\lambda}
$$ 
\end{lemma}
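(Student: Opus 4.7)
The plan is to mimic the argument used in Lemma \ref{lemma:expectimewithdrift}, with care taken to replace the Markov conditioning $X_t = x$ by conditioning on the natural filtration $\mathcal{F}_t = \sigma(X_0,\dots,X_t)$, since $Y_t$ now depends on the entire history and $\{X_t\}$ need not be Markov. The hypotheses of the lemma are stated precisely in the form needed to drive such a filtration-based submartingale argument, so the structural proof carries over verbatim once this adjustment is made.

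First, I introduce the shifted process $Z_t = Y_t - \lambda t$ and verify that it is a submartingale up to the stopping time $\tau$. Conditioning on $\{\tau > t\}$ forces $X_0,\dots,X_t \in A$ and $f_t(X_0,\dots,X_t) \leqslant m-1$, which is exactly the precondition for Hypothesis \textit{1} to apply. Then
\[
\Expec{}{Z_{t+1} \mid \mathcal{F}_t}\mathbbm{1}_{\{\tau > t\}} \geqslant (Y_t + \lambda)\mathbbm{1}_{\{\tau > t\}} - \lambda(t+1)\mathbbm{1}_{\{\tau>t\}} = Z_t \mathbbm{1}_{\{\tau>t\}}.
\]
Since $|Y_t| \leqslant n$ the jumps $|Z_{t+1}-Z_t|$ are uniformly bounded by $n + \lambda$, which gives integrability.

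Next I would argue $\Expec{x}{\tau} < \infty$: the positive drift $\lambda$ on every step before $\tau$ guarantees that the expected value of $Y_{t\wedge\tau}$ grows linearly, while $Y$ is capped at $n$, which forces $\tau$ to be finite in expectation by a standard argument (e.g., partitioning time into blocks of length $\lceil m/\lambda\rceil$ and using a Markov-type inequality on each block, or directly bounding $\Prob{}{\tau > T}$ geometrically). With integrability and the uniform jump bound in hand, Doob's Optional Stopping Theorem applies to the stopped submartingale $Z_{t\wedge \tau}$, yielding $\Expec{x}{Z_\tau}\geqslant \Expec{x}{Z_0} = f_0(x) \geqslant 0$, and hence
\[
\Expec{x}{\tau} \leqslant \frac{\Expec{x}{Y_\tau}}{\lambda}.
\]

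Finally, the bound on $\Expec{x}{Y_\tau}$ reproduces the computation at the end of the proof of Lemma \ref{lemma:expectimewithdrift}: splitting the sum $\Expec{x}{Y_\tau} = \sum_{j=1}^{n} j\,\Prob{x}{Y_\tau = j}$ at the threshold $\lfloor \alpha m\rfloor$ and using Hypothesis \textit{2} on the tail yields $\Expec{x}{Y_\tau} \leqslant 2\alpha m$, which gives the claimed inequality $\Expec{x}{\tau} \leqslant 2\alpha m/\lambda$. The main (minor) obstacle is the first step: convincing oneself that the drift hypothesis, stated for every admissible history ending in $A$ with $f_t < m$, truly implies the submartingale inequality on $\{\tau > t\}$ under the natural filtration; once this is granted, the rest is essentially the same calculation as in the non-adversarial case.
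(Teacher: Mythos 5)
Your proof is correct and follows essentially the same route as the paper's: both adapt the argument of Lemma~\ref{lemma:expectimewithdrift} by passing from Markov conditioning to conditioning on the full history $X_0,\dots,X_t$, verifying that $Z_t = Y_t - \lambda t$ is a submartingale up to $\tau$, invoking Doob's Optional Stopping Theorem under the bounded-jumps and $\Expec{x}{\tau}<\infty$ conditions, and then bounding $\Expec{x}{Y_\tau}\leqslant 2\alpha m$ via Hypothesis~2. The filtration-based phrasing you use is the same adaptation the paper performs implicitly.
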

\begin{proof}
    The proof is a straight adaptation of the proof of Lemma 
    \ref{lemma:expectimewithdrift}, in which we take into account the
    full history of the process.

    Consider the stochastic process $Z_t = Y_t - \lambda t$.
    For any sequence of states $x_0, \dots, x_t \in A$ with $f_t(x_0, \dots, x_t) 
    \leq \border -1$ it holds that
    \begin{eqnarray*}
    \Expec{}{Z_{t+1} \,|\, X_0 = x_0, \dots, X_t = x_t} 
    & = & \Expec{}{Y_{t+1} \,|\, X_0 = x_0, \dots, X_t = x_t} - \lambda (t+1) \\
    & \geqslant & f_t(x_0, \dots, x_t) + \lambda - \lambda (t+1) \\
    & \geqslant & f_t(x_0, \dots, x_t)  - \lambda t
    \end{eqnarray*}
    where in the inequality we used Hypotheses \textit{1}. Thus, $Z_t$ is a
    \emph{submartingale} up to the stopping time $\tau$. 
    Moreover, since $|Y_t| \leq n$ then
    $
    |Z_{t+1} - Z_t| \leq n + \lambda
    $
    and, together with Hypotheses~\textit{1} this implies $\Expec{x}{\tau} <
    \infty$. Thus, we can apply \emph{Doob's Optional Stopping
    Theorem}~\cite{Doob53}. It follows that
    $\Expec{x}{Z_{\tau}} \geqslant \Expec{x}{Z_0} = f_0(x)$ and, since
    $\Expec{x}{Z_{\tau}} = \Expec{x}{Y_{\tau}} - \lambda \Expec{x}{\tau}$, we
    have that
    $$
    \Expec{x}{\tau} \leqslant \frac{\Expec{x}{Y_{\tau}} - f_0(x)}{\lambda} 
    \leqslant \frac{\Expec{x}{Y_{\tau}}}{\lambda}
    $$
    Finally, we get
    \begin{equation*}
    \Expec{0}{Y_{\tau}} = \sum_{j = 1}^{\lfloor \alpha \border  \rfloor} j
    \Prob{0}{Y_{\tau} =j} + \sum_{j = \lfloor \alpha \border  \rfloor + 1}^n j
    \Prob{0}{Y_{\tau} = j}
    \leq \left( \alpha \border  \right) + n \Prob{0}{Y_{\tau} > \alpha
    \border } \leq 2 \left( \alpha \border  \right)
    \end{equation*}
    where in the last inequality we used Hypothesis~\textit{2}.
\end{proof}

Now, we can exploit Lemma \ref{lemma:expectimewithdrift-adv}
to bound the time required by the symmetry-breaking stage: We
show that, from any configuration with $j$ big opinions that satisfies
\hhyp, within
$\mathcal{O}\left(j^2 \log^{1/2} n\right)$ rounds there exists a opinion
supported by at most $n/j - \sqrt{j n \log n}$ nodes with probability at least
$1/2$.
\begin{lemma}[Symmetry-breaking stage]
	\label{le:breaking-adv}
    Let $ \tilde{ \mathbf{c} }$ be any configuration such that 
    $|\bigc|  = j$ and $\sum_{i \in \badcols} \tilde{ c }_i \leq \smallsize$. 
    Within $t = \mathcal{O}\left(j^2 \log^{1/2} n\right)$ rounds, 
    with probability at least $1/2$ it holds that 
    \[
        {|\bigc|  = j \mbox{, } \sum_{i \in \badcols} \tilde{ C }_i 
        \leq \smallsize \mbox{ and } \exists i \in \bigc^{(t)} 
        \mbox{ such that }  
        \tilde{ C }_i^{(t)} \leqslant n/j - \sqrt{j n \log n}}
    \]
\end{lemma}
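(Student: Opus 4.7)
The plan is to mirror the symmetry-breaking argument of Lemma \ref{lemma:breaking}, now substituting each ingredient with its adversarial counterpart: the drift estimate of Lemma \ref{lem:average-adv}, the stability of small opinions of Lemma \ref{le:hyp-H}, the stopping-time concentration of Lemma \ref{le:stophp-adv}, and the non-Markovian hitting-time bound of Lemma \ref{lemma:expectimewithdrift-adv}. First I would fix the initial big set $B$ of size $j$ and consider the stochastic process $\tilde{Y}_t = \lfloor n/j \rfloor - \tilde{C}_\mathbbm{m}^{(t)}$, where $\tilde{C}_\mathbbm{m}^{(t)} = \min_{i \in B} \tilde{C}_i^{(t)}$. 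By Lemma \ref{le:hyp-H} the big set can only grow along a typical trajectory, so throughout the argument $B \subseteq \bigc^{(t)}$ w.h.p. I would use exactly the stopping time of Lemma \ref{le:stophp-adv},
\[
\tau = \inf\{\, t \,:\, \tilde{Y}_t \geq \sqrt{jn\log n} \ \vee\ \textstyle\sum_{i \in \badcols}\tilde{C}_i^{(t)} \geq \fsmallsize\ \vee\ \bad^{(t-1)} \not\subseteq \bad^{(t)}\,\}.
\]

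Next I would verify the two hypotheses of Lemma \ref{lemma:expectimewithdrift-adv} applied to $\tilde{Y}_t$ with target $m = \sqrt{jn\log n}$ and drift $\lambda = \varepsilon\sqrt{n}/j^{3/2}$ for a suitably small $\varepsilon>0$. For any history with $t<\tau$ we have $|\bigc^{(t)}| \geq j$, $\sum_{i \in \badcols}\tilde{c}_i^{(t)} \leq \fsmallsize$ and $\tilde{c}_\mathbbm{m}^{(t)} > n/j - \sqrt{jn\log n}$, so Lemma \ref{lem:average-adv} applies. I would split into two sub-cases exactly as in Lemma \ref{lemma:breaking}. In the quasi-uniform case $\tilde{c}_\mathbbm{m}^{(t)} > n/j - 2\varepsilon\sqrt{n/j}$ all big opinions have support in a narrow window around $n/j$, so the one-step counts $C_i^{(t+1)}$ are binomial, negatively associated and of standard deviation $\Omega(\sqrt{n/j})$; a CLT-type argument gives probability at least $1/2$ that some $C_i^{(t+1)}$ drops below $n/j - 6\varepsilon\sqrt{n/j}$, and since $F = \fadv = o(\sqrt{n/j})$ the adversary cannot erase this dip, yielding the desired drift. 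In the complementary case $\tilde{c}_\mathbbm{m}^{(t)} \leq n/j - 2\varepsilon\sqrt{n/j}$ I would apply the second bound of Lemma \ref{lem:average-adv} directly to an index achieving the minimum, exploiting that the adversarial correction $\smallbias/n$ is absorbed into the drift as long as $\smallsizeconst > \advconst$ and $\varepsilon$ is chosen small. The bounded-jumps hypothesis is exactly Lemma \ref{le:stophp-adv}.

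Combining these, Lemma \ref{lemma:expectimewithdrift-adv} gives $\Expec{\tilde{\mathbf{c}}}{\tau} = \mathcal{O}(j^2\sqrt{\log n})$, and Markov's inequality at $t = \mathcal{O}(j^2\sqrt{\log n})$ implies $\Prob{\tilde{\mathbf{c}}}{\tau > t} \leq 1/2$. By Lemma \ref{le:hyp-H} and a union bound over the polynomially many rounds up to $t$, the two ``bad'' triggers of $\tau$ (the non-valid total exceeding $\fsmallsize$ or a small opinion becoming big) fail w.h.p.; hence on the event $\{\tau \leq t\}$ it must be the case that $\tilde{Y}_\tau \geq \sqrt{jn\log n}$, which delivers some $i \in B \subseteq \bigc^{(\tau)}$ with $\tilde{C}_i^{(\tau)} \leq n/j - \sqrt{jn\log n}$, while the other two constraints still hold.

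The hardest step will be the quasi-uniform case of the drift analysis: one must check that the adversary's budget $F = \fadv$ is strictly dominated by the typical fluctuation $\Theta(\sqrt{n/j})$ of the minimum, with a margin large enough to preserve a net drift of order $\sqrt{n}/j^{3/2}$. This is precisely where the choice of constants $\smallsizeconst > \advconst$ becomes essential and where the bound $F \leq \fadv$ is almost tight; a careless estimate would let the adversary neutralize the CLT-induced randomness and invalidate Hypothesis \emph{1} of Lemma \ref{lemma:expectimewithdrift-adv}.
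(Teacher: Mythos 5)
Your proposal mirrors the paper's own proof essentially step for step: same process $\tilde{Y}_t$, same stopping time with the three triggers, same two-case drift verification via Lemma \ref{lem:average-adv} and a CLT argument in the quasi-uniform case, same use of Lemma \ref{le:stophp-adv} for the bounded-jump hypothesis, the same application of the non-Markovian hitting-time Lemma \ref{lemma:expectimewithdrift-adv}, and the same final combination of Markov's inequality with the w.h.p. stability of Lemma \ref{le:hyp-H}. The only material differences are cosmetic (you stop at constant $1/2$ where the paper lands on $1/3$ after subtracting the negligible bad-event probabilities, and you leave the numerical absorption of $F$ into the drift at the level of orders rather than writing it out), so the proposal is correct and takes the paper's approach.
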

\begin{proof}
We proceeds by adapting the proof of Lemma \ref{lemma:breaking}.
    Let $ \tilde{ \col }^{(0)} =  \tilde{ \mathbf{c} }$ 
    be the initial configuration.
    Let us consider the stochastic process $\{ \tilde{ Y }_t\}_{t \geq 0}$ 
    defined as 
    \[ 
         \tilde{ Y }_t = \left\lfloor\frac{n}{j}\right\rfloor 
         - \tilde{ C }_{\mathbbm{m}}^{(t)} 
    \]
    where 
    $ \tilde{ C }_{\mathbbm{m}}^{(t)} 
        = \min\{ \tilde{ C }_i^{(t)} : i \in \bigc^{(t)} \}$.
    We are interested in the time step
    \[ 
        \tau = \inf \{ t \in \mathbb{N} \,:\,  
        \tilde{ Y }_t \geq ( \sqrt{j n \log n} ) \,\vee\, 
        \big( \sum_{i \in \badcols} \tilde{ C }_i \geq \smallsize \big) \,\vee\,
        ( \bad^{(t-1)} \not \subseteq \bad^{(t)} )
        \} 
    \]
    Now we show that $\{ \tilde{ Y }_t\}_t$ satisfies
    the Hypotheses \textit{1} and \textit{2} of
    Lemma~\ref{lemma:expectimewithdrift} 
    with 
    $A = \big( \sum_{i \in \badcols} \tilde{ C }_i 
        \leq \smallsize \big) \,\vee\,
        ( \bad^{(t-1)} \subseteq \bad^{(t)} )$ 
    and $\lambda = \varepsilon
    \sqrt{n}/j^{3/2}$, for a suitable constant $\varepsilon > \alpha $.

    \smallskip\noindent \textit{1.} Let 
    ${ \tilde{ \mathbf{c} }} $ 
    be any configuration such that 
    $\tilde{c}_{\mathbbm{m}} > n/j - \sqrt{j n \log n}$. 
    Now we prove that 
    \begin{equation}\label{eq:cmindrift-adv}
        \Expec{}{ \tilde{ C }_\mathbbm{m}^{(t+1)} \,|\, 
        \tilde{ \mathbf{C} }^{(t)} =
        \tilde{\mathbf{c}}} \leq  \tilde{ c }_{\mathbbm{m}} - \varepsilon
        \frac{\sqrt{n}}{j^{3/2}} 
    \end{equation}

    \smallskip\noindent 
    \underline{Case $\tilde{c}_{\mathbbm{m}} > n/j - 2
    \varepsilon \sqrt{n/j}$:} Observe that, in this case, random variables
    $\left\{ { C }_i^{t+1} \,:\, i \in \bigc \right\}$ 
    have  standard deviation is $\Omega (\sqrt{n/j} )$. 
    Moreover they are binomial and negatively
    associated.  Hence,  by choosing $\varepsilon$ small enough, from the
    Central Limit Theorem we have that 
    \[ 
        \Prob{}{i \in \bigc \mbox{ exists such
        that } C_i^{(t+1)} \leqslant \frac{n}{j} - 6 \varepsilon \cdot
        \sqrt{\frac{n}{j}}} \geqslant 1/2
    \]
    We thus get 
    \begin{eqnarray}
        \label{eq:cmincase2-adv} 
        \Expec{}{ \tilde{ C }_\mathbbm{m}^{(t+1)} \,|\,
        \tilde{ \mathbf{C} }^{(t)} = \tilde{\mathbf{c}}} 
        & \leq & \frac{1}{2} \left(
        \frac{n}{j} - 6 \varepsilon \cdot \sqrt{\frac{n}{j}} \right) +
        \frac{1}{2} \cdot \frac{n}{j} + \fadv \nonumber \\ 
        & = & \frac{n}{j} - 2 \varepsilon \sqrt{\frac{n}{j}} + \fadv
        \leq  \tilde{ c }_{\mathbbm{m}} - \varepsilon
        \sqrt{\frac{n}{j}} \leq \tilde{ c }_{\mathbbm{m}} - \varepsilon
        \frac{\sqrt{n}}{j^{3/2}} 
    \end{eqnarray}

    \smallskip\noindent \underline{Case $\tilde{c}_{\mathbbm{m}} \leqslant n/j -
    2 \varepsilon \sqrt{n/j}$:} Equation \eqref{eq:cmindrift-adv} easily follows
    from Lemma \ref{lem:average-adv}. Indeed, let $i \in \bigc$ be a opinion such that
    $\hat{c}_i = \hat{c}_{\mathbbm{m}}$, then 
    \begin{eqnarray*}
        \Expec{}{ \tilde{ C }_\mathbbm{m}^{(t+1)} \,|\, 
        \tilde{ \mathbf{C} }^{(t)} =
        \tilde{\mathbf{c}}} & \leq & \Expec{}{ \tilde{ C }_i^{(t+1)} \,|\,
        \tilde{ \mathbf{C} }^{(t)} 
        = \tilde{\mathbf{c}}} \leq \tilde{c}_i \left( 1 + \frac{\tilde{c}_i 
        + \smallbias}{n} 
        - \frac{1}{j} \right) \\ 
        & \leq & \tilde{c}_i \left( 1 - \frac{2\varepsilon}{\sqrt{nj}} 
        + \frac { \alpha }{ \sqrt{kn} }
        \right) \leq \tilde{c}_i - \frac{\varepsilon\sqrt{n}}{j^{3/2}} 
        = \tilde{c}_{\mathbbm{m}} - \varepsilon \frac{\sqrt{n}}{j^{3/2}} 
    \end{eqnarray*} 
    where  we used the case's condition   and   
    $\tilde{c}_i = \tilde{c}_{\mathbbm{m}} \geqslant n/(2j)$.

    \smallskip\noindent \textit{2.}
    Since random variables 
    $\left\{ \tilde{ C }_i^{(t)} \,:\, i \in \bigc^{(t)} \right\}$ 
    are binomial conditional on the configuration at round $t-1$, 
    from the Chernoff bound it follows that 
    \begin{equation}
    \label{eq:hypjumpnew-adv}
    \Prob{ \tilde{ \mathbf{c} }}{ \tilde{ Y }_{\tau} 
    \geq \alpha \sqrt{j n \log n}} \leq \frac{1}{n}, \
    \mbox{ for  some    constant } \ \alpha > 1 
    \end{equation}
    See Lemma \ref{le:stophp-adv} for the formal statement of the last fact.
 
    From \eqref{eq:cmindrift-adv} and \eqref{eq:hypjumpnew-adv} we
    have that $\{ \tilde{ Y }_t\}_t$ satisfies the hypotheses
    of~Lemma~\ref{lemma:expectimewithdrift-adv} with $m = \sqrt{j n \log n}$, 
    $\lambda = \varepsilon \sqrt{n}/j^{3/2}$ and 
    $A = \big( \sum_{i \in \badcols} \tilde{ C }_i 
        \leq \smallsize \big) \,\vee\,
        ( \bad^{(t-1)} \subseteq \bad^{(t)} )$. 
    Moreover, by iteratively applying Lemma \ref{le:hyp-H}, we have
    that, for any $t = \bigO( n^2 )$, it holds w.h.p. that 
    $\big( \sum_{i \in \badcols} \tilde{ C }_i^{(t)} 
        \leq \smallsize \big) \,\vee\,
        ( \bad^{(t-1)} \subseteq \bad^{(t)} )$.
    Thus, from Markov's inequality, for $t = 2 j^2 \sqrt{\log n}$,
    we have that 
    \begin{multline*}
        \mathbf{P}_{ \tilde{ \mathbf{c} }}\big( \forall i \in \bigc,\,
        \big( C_i^{(t)} \leq n/j - \sqrt{j n \log n} \big) \,\wedge\, 
            \big( \sum_{i \in \badcols} \tilde{ C }_i^{(t)} 
            \leq \smallsize \big) \,\wedge\,
            ( \bad^{(0)} \subseteq \bad^{(t)} ) \big) \\
        \geq \Prob{ \tilde{ \mathbf{c} }}{\hat \tau \leq 2 j^2 \sqrt{\log n}} 
            \geq \frac{1}{3}
    \end{multline*}
    where $ \hat\tau = \inf \{ t \in \mathbb{N} \,:\, 
            \tilde{ Y }_t \geq \sqrt{j n \log n} \}$.
	
\end{proof}

Once the smallest opinion goes below the average size
of the big opinions by a certain small amount, 
we can prove that the process push it in the set
of small opinions w.h.p.

\begin{lemma}[Dropping stage]
    Assume that, at round $\timestart$, $\tilde{ \mathbf{c}}^{(\timestart)}$ 
    is such that $\sum_{i \in \badcols}c_i \leq \smallsize$,
    $|\bigc^{(\timestart)}| = j$, and an $i\in \bigc^{(\timestart)}$ exists
    such that $ \smallsize \leq c_i^{(\timestart)} \leq n/j - \sqrt{kn\log n}
    $. Then, w.h.p.,  a round $\timeend =\timestart + {O}(k\log n) $ exists such
    that $\sum_{i \in \badcols} \tilde{ C }_i^{(\timeend)} \leq \smallsize$, 
    $i\in \bad^{(\timeend)}$ and $|\bigc^{(\timeend)}| \leq j-1 $.
	\label{le:dying-adv}
\end{lemma}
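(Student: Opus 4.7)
The plan mirrors Lemma \ref{le:dying_noadv}, using Lemma \ref{lem:average-adv} in place of Lemma \ref{lem:average} for the multiplicative drift of $\tilde{C}_i^{(t)}$, while invoking Lemma \ref{le:hyp-H} iteratively to preserve the two global invariants $\sum_{\ell \in \badcols} \tilde{C}_\ell^{(t)} \leq \smallsize$ and $\bad^{(t-1)} \subseteq \bad^{(t)}$ throughout the $O(k\log n)$-round window starting at $\timestart$. By a union bound these invariants hold simultaneously for all $t\in[\timestart,\timestart+O(k\log n)]$ w.h.p. In particular, once the distinguished opinion $i$ enters $\bad$, it stays there forever and no new big opinion can be created meanwhile; hence the two conclusions $i\in\bad^{(\timeend)}$ and $|\bigc^{(\timeend)}|\leq j-1$ both follow as soon as I show that $\tilde{C}_i^{(t)}$ crosses the threshold $\smallsize$ within $O(k\log n)$ rounds w.h.p.

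For the drift phase, I fix the opinion $i$ provided by the hypothesis and restrict attention to rounds in which $\smallsize\leq \tilde{c}_i^{(t)}\leq n/j-\sqrt{kn\log n}$. A short case analysis of the $\min$ in Lemma \ref{lem:average-adv} shows that in this range we are always strictly below the equality point $n/j-2\smallbiasconst\sqrt{n/k}$, so the second branch dominates and the expected relative drift equals $\tfrac{1}{2}(1/j-\tilde{c}_i^{(t)}/n)$. Then, mimicking the iterative Chernoff argument that led to \eqref{eq:hp-noadv} but conditioning on the full past history (since the adversary makes the process non-Markovian, exactly as in Lemma \ref{lemma:expectimewithdrift-adv}), a Chernoff bound on the binomial $C_i^{(t+1)}$ together with the deterministic adversary offset yields
\[
\tilde{C}_i^{(t+1)} \;\leq\; \tilde{c}_i^{(t)}\left(1-\tfrac{1}{4}\left(\tfrac{1}{j}-\tfrac{\tilde{c}_i^{(t)}}{n}\right)\right) + F
\]
with probability $1-1/\poly(n)$, uniformly in $\tilde{c}_i^{(t)}\geq\smallsize$. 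Iterating this recursion, $\tilde{C}_i^{(t)}$ contracts by a constant factor every $O(k)$ rounds, so it reaches $\smallsize$ by some $\timeend=\timestart+O(k\log n)$ w.h.p., after a union bound over the window.

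The main obstacle is the tight interplay between the multiplicative drift and the additive adversary push $F$ in the final rounds, when $\tilde{c}_i^{(t)}$ approaches $\smallsize$: at that scale the drift is only $\Omega(\smallsizeconst\sqrt{n}/k^{5/2})$, while $F=\advconst\sqrt{n}/(k^{5/2}\log n)$ points in the opposite direction, so the drift dominates by just a logarithmic factor. Choosing $\smallsizeconst$ sufficiently larger than $\advconst$ (the same condition already exploited in Lemma \ref{le:hyp-H}) and absorbing the additive $F$ into the multiplicative factor in the Chernoff step is what makes the recursion still converge all the way down to $\smallsize$; this quantitative tuning of the constants, together with carrying the history-conditioning through all $O(k\log n)$ iterations, is the subtle technical point that distinguishes this lemma from its non-adversarial counterpart.
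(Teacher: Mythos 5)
Your high-level plan (carry the invariants via iterated use of Lemma~\ref{le:hyp-H}, fix the distinguished opinion $i$, apply Lemma~\ref{lem:average-adv} for the drift, and tune constants so that the drift beats the adversarial push $F$ near the $\smallsize$ threshold) matches the paper's structure, and your reading of the $\min$ in Lemma~\ref{lem:average-adv} is correct. But the step where you say ``iterating this recursion, $\tilde{C}_i^{(t)}$ contracts by a constant factor every $O(k)$ rounds'' is exactly the point the paper flags as nontrivial, and as stated it is false in the initial regime. At the start of the dropping stage, $\tilde{c}_i^{(t)} \approx n/j - \sqrt{kn\log n}$, so the per-round multiplicative factor you extract from Lemma~\ref{lem:average-adv} is only $1 - \Theta\bigl(\sqrt{k\log n / n}\bigr)$. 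Shrinking $\tilde{C}_i$ by any constant factor would then require $\Theta\bigl(\sqrt{n/(k\log n)}\bigr)$ rounds, which is polynomially larger than $k\log n$ whenever $k = \polylog(n)$. So the recursion on $\tilde{C}_i$ alone does not give the claimed bound, and the ``subtle technical point'' you identify (absorbing $F$ at the $\smallsize$ scale) is real but not the main difficulty.

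The paper circumvents this by switching to the \emph{displacement} $\frac{n}{j} - C_i$. From the same per-round inequality you derive, one obtains (inequality~\eqref{eq:dying_whp}) that the displacement \emph{grows} multiplicatively by a factor $\bigl(1 + \constone\,c_i^{(t-1)}/(2n)\bigr)$ per round; as long as $c_i = \Omega(n/j)$ this factor is $1 + \Omega(1/j)$, uniformly bounded away from $1$, so the displacement inflates from $\sqrt{kn\log n}$ to $\Omega(n/j)$ within $O(j\log n)$ rounds. That gets $C_i$ down to $\constthree\,n/j$; from there $C_i$ itself contracts by $1 - \Omega(1/j)$ per round and reaches $\smallsize$ in another $O(j\log n)$ rounds. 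You need to insert this change of variable (or an equivalent argument): without it, the ``iterate the multiplicative bound on $\tilde C_i$'' step does not close in $O(k\log n)$ rounds. Everything else in your proposal — the union-bounding of the invariants over the window, the identification of the dominant branch of the $\min$, the Chernoff bound conditioned on the full history, the $\smallsizeconst$ versus $\advconst$ tuning, and the deduction of $i\in\bad^{(\timeend)}$ and $|\bigc^{(\timeend)}|\leq j-1$ from Lemma~\ref{le:hyp-H} — is in line with the paper.
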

\begin{proof}
    By iteratively applying Lemma \ref{le:hyp-H}, we have that, w.h.p., 
    for each $t \in \left\{ t', \dots, t''-1 \right\}$ it holds
	$\sum_{i \in \badcols} \tilde{ C }_i^{(t)} \leq \smallsize$ and
    $i\in \bad^{(t)}$.

    To prove that $|\bigc^{(\timeend)}| \leq j-1 $, we first prove that, for each round $t \in \left\{
    \timestart + 1, \dots, \timeend \right\}$,
    w.h.p. $ \tilde{ C }_i^{(t)}$ decreases by a certain extent that depends
    on $ \tilde{ c }_i^{(t-1)}$, regardless of what the adversary does.

    Let $ \dieadvfactor = ( {1} / {j} - ( \tilde{ c }^{(t-1)}_{i} 
    + \smallbias) / {n})$
	If we are in a configuration satisfying the hypotheses of
	the lemma, we have
	\begin{eqnarray*}
	\Prob{}{C_{i}^{(t)}> \tilde{ c }^{(t-1)}_{i}
        \left(1-\frac{\dieadvfactor}{2} \right)} 
        = \Prob{}{C_{i}^{(t)}> \tilde{ c }^{(t-1)}_{i}
        \left(1-\dieadvfactor(1 + \delta) \right)}, 	
	\end{eqnarray*}
	where $ \delta = {\frac{1}{2}\dieadvfactor} / ({1-\dieadvfactor})$.
	Thus, using Lemma \ref{lem:average-adv} and applying the Chernoff bound we have
	\begin{eqnarray}
	\Prob{}{C_{i}^{(t)}> \tilde{ c }^{(t-1)}_{i}
	\left(1-\frac{\dieadvfactor}{2}\right)} 
    \leq \exp\left\{ -\frac{\delta^2}{3} \dieadvfactor
        \tilde{ c }^{(t-1)}_{i}\right\} 
	< \exp\left\{ - \frac{1}{3}\left(\frac{1}{2}\dieadvfactor\right)^2 
    \tilde{ c }^{(t-1)}_{i}\right\} 
	= n^{-\Theta\left(1\right)},
	\label{eq:hp-adv-4}
	\end{eqnarray}
	where the second inequality follows from the definition of $\delta$ and
	the fact that its denominator is smaller than $1$, and the
    equality in \eqref{eq:hp-adv-4} follows by minimizing 
    $\dieadvfactor^2 \tilde{ c }^{(t-1)}_{i}$ for 
    $\smallsize \leq c_i^{(\timestart)} \leq n/j - \sqrt{kn\log n}$.
    
    It follows that, w.h.p.
    \begin{equation}
        \label{eq:hp-adv-5}
            \tilde{ C }_i^{(t)} = { C }_i^{(t)} + D_i^{(t)} \leq  
            \tilde{ c }^{(t-1)}_{i} 
            \left(1-\frac{\dieadvfactor}{2}\right) 
            + F \leq \tilde{ c }^{(t-1)}_{i} 
    \end{equation}
	
	Thus, w.h.p., we can iteratively apply \eqref{eq:hp-adv-5} until 
	$ \tilde{ c }^{(t-1)}_{i} \leq \smallsize$.
    We next prove that this happens within $\mathcal{O}\left(k\log 
	n\right)$ rounds, w.h.p. 
	Interestingly, showing that, within $\mathcal{O}\left(k\log 
	n\right)$ rounds, $C_i$ decreases to a costant fraction of its 
	value at the beginning of the dropping stage does not seem obvious. 
	For this reason, we consider the evolution of the displacement $\frac{n}{j}-C_{i}$, which seems 
	analytically more tractable. 
	To this purpose, note that \eqref{eq:hp-adv-4} implies that, w.h.p.
    \begin{eqnarray}
	\frac{n}{j}-C_{i}^{(t)} 
    & \geq& \frac{n}{j} - c^{(t-1)}_{i} 
    + \frac{c^{(t-1)}_{i}}{2}\left(\frac{1}{j}-\frac{c^{(t-1)}_{i} + \smallbias}{n}\right) \nonumber\\
    &=&  \frac{n}{j} - c^{(t-1)}_{i} 
    + \frac{c^{(t-1)}_{i}}{2}\left(\frac{1}{j}-\frac{c^{(t-1)}_{i}}{n} 
    \right) \left( 1 - \frac{ \smallbias}{ 1/j - c^{(t-1)}_{i}/n } \right) \nonumber \\
    &=& \frac{n}{j} - c^{(t-1)}_{i} 
    + \frac{c^{(t-1)}_{i}}{2}\left(\frac{1}{j}-\frac{c^{(t-1)}_{i}}{n} 
    \left( 1 + \frac{ \alpha }{ \log n } \right) \right) \nonumber \\
    &=& \left(\frac{n}{j} 
    - c^{(t-1)}_{i}\right) \left(1 + \constone \frac{c^{(t-1)}_{i}}{2n}\right)
    \label{eq:dying_whp}
    \end{eqnarray}
    for some constant $ \constone > 0 $, where in the first equality of 
    \eqref{eq:dying_whp} we have used that $ n / j - c^{(t-1)}_{i} \geq 
    \sqrt{ k n \log n }$. 
	
	We can now conclude the proof of Lemma~\ref{le:dying-adv}.
	We first prove that $C_{i}\leq n/\left(2j\right)$ within 
	$\mathcal{O}\left(k\log n\right)$ steps, w.h.p. To this purpose, note that 
	$\frac{n}{j}-c_{i}\ge \sqrt{kn\log n}$ at the beginning of the droppign stage 
	from the hypotheses. 
	Furthermore, for some positive constants $\consttwo$ and $\constthree$, 
    as long as $C_{i} \geq \constthree  {n} / {j} $ it holds 
    $1 + \constone {c_{i}} / {n} \geq 1 + {\consttwo} / {j}$. 
    Hence, after $\mathcal{O}\left(k\log n\right)$ steps, 
	w.h.p. we have $ \frac{n}{j}-c_{i} \geq (1-\constthree)\frac{n}{j}$, 
    which in turn implies $c_{i} \leq \constthree n / j$.
	Once $c_{i} \leq \constthree n / j $, using again \eqref{eq:dying_whp} we 
	have that $C_i$ decreases by a factor $1 - \Omega(1/j)$ 
	in every round w.h.p. By standard concentration arguments we obtain 
    that eventually $c_{i} \leq \smallsize$ within 
	$\mathcal{O}\left(k\log n\right)$ more steps, w.h.p.
\end{proof}

\end{document}